\newtheorem{theorem}{Theorem}
\theoremstyle{plain}
\newtheorem{definition}[theorem]{Definition}
\newtheorem{example}[theorem]{Example}
\newtheorem{lemma}[theorem]{Lemma}
\newtheorem{remark}[theorem]{Remark}
\numberwithin{equation}{section}
\numberwithin{theorem}{section}
\newcommand{\E}{\ensuremath{\mathbb{E}}}
\newcommand{\Prob}{\ensuremath{\mathbb{P}}}
\newcommand{\invF}[0]{\mathcal{G}}
\newcommand{\invFF}{\mathbb{G}}
\newcommand{\Rr}{{\mathbb{R}_0}}
\newcommand{\aui}{$A_{\text{u.i.}}$}
\newcommand{\uniproc}{\Psi}
\newcommand{\ie}{i.e.~}
\newcommand{\ind}{\mathbf{1}}
\begin{document}
\title[Optimal investments in defaultable assets]{Information and optimal investment \\in defaultable assets}
\date{\today }
\author[Di Nunno]{Giulia Di Nunno}
\address{Giulia Di Nunno: Center of Mathematics for Applications, University of Oslo,
PO Box 1053 Blindern, N-0316 Oslo, Norway, and, Norwegian School of Economics and Business Administration, Helleveien 30, N-5045 Bergen, Norway.}
\author[Sjursen]{Steffen Sjursen}
\address{Steffen Sjursen: Center of Mathematics for Applications, University of Oslo, PO Box 1053 Blindern, N-0316 Oslo, Norway}
\email[]{giulian\@@math.uio.no, s.a.sjursen\@@cma.uio.no}
\urladdr{http://folk.uio.no/giulian/}
\thanks{The research leading to these results has received funding from the
European Research Council under the European Community's Seventh Framework
Programme (FP7/2007-2013) / ERC grant agreement no [228087]}

\keywords{Information, optimal portfolio, default risk, insider, forward integrals}




\begin{abstract}
We study optimal investment in an asset subject to risk of default for investors that rely on different levels of information. The price dynamics can include noises both from a Wiener process and a Poisson random measure with infinite activity. The default events are modeled via a counting process in line with large part of the literature in credit risk. In order to deal with both cases of inside and partial information we consider the framework of the anticipating calculus of forward integration. This does not require a priori assumptions typical of the framework of enlargement of filtrations. We find necessary and sufficent conditions for the existence of a locally maximizing portfolio of the expected utility at terminal time. We consider a large class of utility functions. In addition we show that the existence of the solution implies the semi-martingale property of the noises driving the stock. Some discussion on unicity of the maxima is included.
\end{abstract}

\maketitle

\section{Introduction: The model, the optimization problem, the streams of information}

Occasionally, we observe that unexpected events wipe out shareholder values. We will generically call all these events \emph{default events}.
Inspired by default risk literature, we consider a model for stocks where there is a varying risk of instantaneous loss in the stock value. 

Of particular interest here is when the default events are dependent on the noises driving the stock or when the investor has insider information. In these cases mathematical questions arise as to whether the driving noises are still (semi)-martingales and the relevant stochastic integrals can be interpreted in the It\^o sense. Since this is not a priori certain, we choose to investigate this issue using forward integration in the modeling of stock prices. With this we do not need a priori assumptions or restrictions on the information available to the investor and we will be able to use a unique framework for all the situations of interest.

Our main result is a sufficent and necessary criteria for an optimal investment strategy maximizing the expected utility of the final portfolio value, for a portfolio involving the defaultable asset. 
We remark that this result also holds for optimization problems with partial or delayed information. 

Furthermore we show that the existence of an optimal strategy yields the semi-martingale property of the noises. This would usually be assumed a priori if working in the framework of enlargement of filtrations see for instance \cite{Bielecki2004,coculescu2012, DeLong2006,ElKaroui2010,Jeanblanc2009,Jeanblanc2010}.

\medskip

The defaultable stock is modeled with three random noises, a Wiener process $W$, a Poisson random measure $N$ and a pure jump process $H$. The occurence of defaults or catastrophic events is modelled by $H$. The intensity of $H$, as viewed by the investor, is stochastic and can either depend on current and future knowledge of $W$ and $N$ or be independent of the two. 

Our model market on the time horizon $[0,T]$ ($T>0$) consists of a (non-defaultable) bond $S_0$ serving as num\'eraire with dynamics:
\begin{align}\label{eq:S_0}
dS_0 (t) =&\; S_0(t) \rho(t) dt,   \\
S_0(0) = &1\nonumber
\end{align}
and a defaultable asset $S_1$ with price dynamics:
\begin{align}\label{eq:S_1}
d^-S_1 (t) =&\; S_1 (t-) \Big( \mu (t) dt + \sigma(t) d^-W (t) \\
&+ \int_{\mathbb{R}_0} \theta(t,z) \tilde{N}(d^-t,dz) + \kappa(t) dH(t) \Big), \quad S_1 (0)>0.\nonumber
\end{align}
Here $W(t)$, $t \geq 0$, is a standard Wiener process and $N(dt,dz)$, $t\geq 0$, $\mathbb{R}_0:=\mathbb{R}\setminus \{0\}$ is a Poisson random measure, independent of $W$ and with $\mathbb{E} [N(dt,dz)] = \nu(dz)dt$. We denote $\tilde N(dt,dz) := N(dt,dz) - \nu(dz) dt$.
Moreover $H(t)$, $t \geq 0$ is a c\`adl\`ag counting process, with
\begin{equation*}
\E \big[ H(T) \big] < \infty \quad \text{and} \quad \Prob\big( \Delta H(t) >1 \text{ for any $t\in[0,T]$} \big) = 0.
\end{equation*}
We remark that $H$ is not necessarily independent of $N$ and $W$. Being $H$ a process of finite variation the corresponding integral is intended path-wise. On the other side, the $d^-$ indicates forward integration. The forward integral extends the It\^o integral but does not require the integrand to be adapted to a specific filtration, see Section \ref{section:forward_integrals} for details.

The random processes considered live in a complete probability space $(\Omega, \mathcal{A}, \mathbb{P})$. In the sequel the following $\Prob$-augmented filtrations appear
\begin{itemize}
\item
$\mathbb{F}^H := \big\{ \mathcal{F}^H_t \subset \mathcal{A}, \: t \geq 0 \big\}$ where
$\mathcal{F}^H_t = \sigma \big\{ H(s), \: s \leq t\big\}$,
\item
$\mathbb{F} := \big\{ \mathcal{F}_t \subset \mathcal{A}, \: t \geq 0 \big\}$ where
$\mathcal{F}_t = \sigma \big\{ W(s), N((s,t],B), \: s \leq t,\, B\in \mathcal{B}(\mathbb{R}_0) \big\}$,
\item
$\mathbb{G} := \big\{ \mathcal{G}_t \subset \mathcal{A}, \: t \geq 0 \big\}$ where
$\mathcal{G}_t$ is a right continuous filtration that represents the information available to the investor at time $t$.
\end{itemize}
We assume that the coefficients $\rho$, $\mu$, $\sigma$, and $\kappa$ are c\`agl\`ad
stochastic processes and $\theta$ is a c\`agl\`ad random field, in the sense that $\theta(\cdot,z)$ is c\`agl\`ad $\nu$-a.e. ($\mathbb{P}$-a.e.). Here $\rho$, $\mu$, $\sigma$ and $\kappa$ are measurable with respect to $\mathcal{A}\times  \mathcal{B}([0,T])$ while $\kappa$ is $\mathcal{A}\times  \mathcal{B}([0,T]\times \mathbb{R}_0)$-measurable. The choice of the forward integral in \eqref{eq:S_1} allows us to drop the usual requirements of adaptedness of the coefficients to the given information. Naturally, in the case of adaptedness \eqref{eq:S_1} could be expressed in terms of It\^o integration (see Section \ref{section:forward_integrals}).

The Borel measure $\nu(dz)$ on $\mathbb{R}_0$ is $\sigma$-finite and satisfies $\int_{\mathbb{R}_0}  z^2 \nu(dz) < \infty$. For modeling purposes $\kappa$ would be taken to be negative though it is not a necessary condition for the optimization problem.

We denote $\Lambda$ as the $\mathbb{G}$-predictable intensity of $H$, \ie the $\mathbb{G}$-predictable random measure such that
\begin{equation*}
\E\Big[ \int\limits_0^t \kappa(s) dH(s) \Big] = \E\Big[ \int\limits_0^t \kappa(s)\Lambda(ds) \Big],
\end{equation*}
for all $\invFF$-predictable processes $\kappa$. 
In addition, we assume that $H$ and $N$ do not jump at the same time, \ie
\begin{align}
\mathbb{P} \big(\text{There exist } t\in[0,T] \text{ and } U \subset \mathbb{R}_0 & \text{ compact}\text{ such that } \nonumber \\
& \Delta H(t) >0  \text{ and }  N\big( \Delta t,U \big) > 0 \big) = 0.
\label{assumption:no_common_jumps}
\end{align}

We set $\sigma$ forward integrable with respect to $W$, $\theta$ and $\ln(1+\theta)$ forward integrable with respect to $N$ and
\begin{equation}
\mathbb{E} \Big[ \int\limits_0^T  \big| \mu(s) \big| + \big| \sigma(s) \big| ^2 + \int_{\mathbb{R}_0} \big| \theta(s,z)  \big|^2 \nu(dz)\, ds + \int\limits_0^T \big| \kappa(s) \big|\Lambda(ds) \Big] < \infty.
\label{eq:finite_default_integrals}
\end{equation}
To have $S_1$ well defined and non-negative at all times, we assume 
\begin{align}
-1 &< \theta(t,z,\omega)\quad\quad  dt\times \nu(dz)\times d\mathbb{P} \text{ a.e.} \\
-1 &\leq  \kappa(t,\omega)  \quad\quad  dt\times d\mathbb{P} \text{ a.e.}   
\label{eq:kappa_bounded}
\end{align}
%
%
Using an adequate version of the It\^o formula (Theorem \ref{teorem:newItoFormula}), we see that the solution of \eqref{eq:S_1} is
\begin{align}\label{eq:S1_solution} 
S_1(t) =&\; S_1(0)  \prod_{ \substack{ \Delta H(s)>0 \\ s\leq t }} \bigg( 1 +   \kappa(s)  \Delta H(s) \bigg)  \exp \bigg\{ \int\limits_0^{t} \big[ \mu(s) -\frac{1}{2} \sigma^2(s) \big] ds  \\
&+ \int\limits_0^{t} \sigma(s) d^-W(s) - \int\limits_0^{t} \int\limits_{\mathbb{R}_0} \big[ \ln\big(1+\theta(s,z)\big)- \theta(s,z) \big] \nu(dz)\, ds \nonumber \\
&+ \int\limits_0^{t} \int\limits_{\mathbb{R}_0} \ln\big(1+\theta(s,z)\big)\tilde{N}(d^-s,dz) \bigg\} \nonumber
\end{align}
and it is easy to argue that this solution is unique. This can be achieved using similar arguments as in \cite[Theorem 37]{Protter2005} though adapted to forward integration.

\bigskip

The investor's optimization problem is to divide his money between the asset $S_1$ and the bond $S_0$ in order to achieve the maximum expected utility of the portfolio value at the end of the period allowed. The investor bases his decisions on the information available to him represented by the filtration $\invFF$. 
The investor's wealth $\tilde{X}_\pi(t)$, $t \in [0,T]$, is given by:
\begin{equation}
d\tilde{X}(t) = (1-\pi(t)) dS_0(t) + \pi(t) d^-S_1(t)
\label{eq:investor_sde}
\end{equation}
with initial value $\tilde{X}(0) = x_0>0$. The process $\pi(t)$, $t \in [0,T]$, represents the fraction of wealth invested in $S_1$.
Note that $\pi$ is a $\mathbb{G}$-adapted stochastic process.

We aim for generality in how the optimization scheme ends. In particular we are interested in the two different scenarios:
\begin{enumerate}
\item It is no longer possible to invest in the asset $S_1$ after the first jump of $H$. 
In this case, the jump of $H$ signifies default or another catastrophic event. See, e.g. \cite{Scalliet2008,Bouchard2003,DeLong2006}.
\item It is possible to invest in the asset $S_1$ even after several ``default'' events. The jumps of $H$ signify the occurence of these ``default'' events. 
and the dynamics of $S_1$ can possibly change. See, e.g. \cite{ElKaroui2010,pham2010stochastic}. 
\end{enumerate}
To describe both the above scenarios, we assume that the it is longer possible to invest in $S_1$ after a $\invFF$-stopping time $\tau \leq T$. For the period $(\tau,T]$ all the investor's wealth is invested in the bond. The stopping time $\tau$ must satisfy $\tau \leq T$, meaning that the optimization problem terminates in any case when the time horizon is reached, and $\tau \leq \inf\{ t\in [0,T] : S_1(t) = 0\}$, meaning that the optimization problem ends if there is no value in the asset $S_1$. 




By application of the It\^o formula, we can see that the (unique) solution of \eqref{eq:investor_sde}, for a given admissible $\pi$ (see Definition \ref{definition:Allowable_controls1}), is:
\begin{align}
\tilde{X}_{\pi}(t) &= x_0 \exp \Big\{ \int\limits_0^{t} \Big[ \rho(s) + \big(\mu(s)-\rho(s) \big) \pi(s) - \frac{1}{2} \sigma^2(s) \pi^2(s) \Big] ds \nonumber \\
+&\; \int\limits_0^{t} \int\limits_{\mathbb{R}_0} \Big[ \ln\big(1+\pi(s)\theta(s,z)\big)- \pi(s)\theta(s,z) \nu(dz) \Big] ds \nonumber + \int\limits_0^{t} \sigma(s) \pi(s) d^- W(s) \nonumber  \\
+&\; \int\limits_0^{t} \int\limits_{\mathbb{R}_0} \ln\big(1+\pi(s)\theta(s,z)\big)\tilde{N}(d^-s,dz)  + \int\limits_0^{t} \ln\big(1+ \kappa(s) \pi(s) \big) dH(s)  \Big\}.
\label{eq:investor_sde_solution}
\end{align}
and set $X_\pi(T) := \tilde{X}_{\pi}(\tau) e^{\int_\tau^T \rho(s) ds}$.

In summary we study the optimal portfolio problem
\begin{equation}\label{OP}
\sup_{\pi\in\mathcal{A}_{{\mathbb{G}}}} \mathbb{E} \big[ U \big( \tilde{X}_{\pi}(\tau) e^{\int_\tau^T \rho(s) ds} \big) \big] = \sup_{\pi\in\mathcal{A}_{{\mathbb{G}}}} \mathbb{E} \big[ U \big( X_{\pi}(T) \big) \big],
\end{equation}
of an investor having $\mathbb{G}$ as information flow at disposal and $U$ as utility function. Here $\mathcal{A}_{{\mathbb{G}}}$ represents the set of admissible portfolios (see Definition \ref{definition:Allowable_controls1}).
%

\bigskip
The optimization scheme itself Theorem \ref{theorem:localMax} and the related Theorem \ref{theorem:semi-martingale} are an extension of the results in \cite{Biagini2005, DiNunno2005} to include a form of default risk.
We refer to \cite{Russo1993, Russo1995} for the treatment of the forward integral with respect to the Wiener process, and to \cite{DiNunno2005} for the case of integration with respect to the compensated Poisson random measure. The forward integral is an extension of the It\^o integral, but does not require the adaptedness of the integrands to the integral filtration.
Applications of this type of integration to optimization problems and the justification of the use of these integrals from the modeling point of view have been studied. See, e.g.  \cite{Biagini2005,DiNunno2006, Okur2009, Kohatsu-Higa2006}. We also refer to \cite{DiNunno2009} for a unified presentation of the topics.

\medskip

Related to our optimization problem is the optimization of investments under uncertain time-horizons, as done in \cite{Scalliet2008,DeLong2006}. 
In \cite{Scalliet2008}, optimization ends at a stopping time $\tau$ related to the noise in stock price. 
In \cite{DeLong2006} both optimal consumption and investment are treated. Typically the problems are solved using some variants of Hamilton-Jacobi-Bellman (HJB) equations.
Our approach differs from these works for several reasons. 
First we focus on different streams of information for the investor, second we consider that the loss in the case of default depends on the position in the risky asset. Moreover, our approach is different in framework and we do not use HJB type solutions. 
In \cite{Lim2009} we find a study of a problem similar to ours. The approach is however entirely different as in this case backward stochastic differential equations are involved.
Moreover we allow for a more general information structure and we consider a 
L\'evy type of noise in the price dynamics.

Our work has some similarities to \cite{Bauerle2007}, where an optimization problem is considered when the stock dynamics include a jump component with an unknown intensity modeled by a continous time Markov chain. But the filtering techniques therein may be less suited to default modeling since default is a jump happening only once. The methodology presented there relies on HJB equations and differs from ours.

Bielecki and coauthors consider various forms of optimal investments in, e.g. \cite{Bielecki2004}, \cite{Bielecki2007} and \cite{Bielecki2006a}, looking at optimality and hedging when there is a number of instruments, some of which are subject to default. 
However, their main focus is on the use of defaultable instruments for hedging purposes and the evaluation on whether to invest in defaultable bonds. In the same line is the study in \cite{Hou2002}.

As announced, in this paper we adopt the framework of anticipating stochastic calculus, specifically forward integration to tackle the optimization problem \eqref{OP}. Moreover, we consider the problem for various choices of investor's information flow $\mathbb{G}$.
To the best of our knowledge it is the first time that the framework of forward integration is applied in optimization problems in presence of default.

\medskip
In this paper we provide a characterization for the existence of locally optimal controls in a great generality both in the choice of utility function and in the amount of information available. Considerations on the meaning of locality and some examples are also provided. These topics are presented in Section 3. The key results of forward integration is summarized in Section 2. In Section 4 we reinterpret the results of section 3 in the context of semimartingale-integration.


\section{Mathematical framework: Forward Integrals}
\label{section:forward_integrals}

Forward integrals were introduced by Russo and Valois in the articles \cite{Russo1993} and \cite{Russo1995} for continuous processes and in \cite{DiNunno2005} for pure jump L{\'e}vy process, see also \cite{DiNunno2009} for a systematic presentation. 

The forward integral is a type of stochastic anticipating integration that does not require assumptions of adaptedness or predictability to some filtration related to the integrator. Moreover, it is also an extension of the It\^o integral in the sense that when the appropiate predictability is in place the two integrals coincide. This makes the forward integral especially suited for studying portfolio optimization problems under insider or partial information, where different filtrations are considered. See for, e.g. \cite{Biagini2005, DiNunno2005} and \cite{DiNunno2009}.

We follow the idea of \cite{Kohatsu-Higa2006} and consider the forward integral with respect to the Wiener processes as a limit in $L^1(\mathbb{P})$. This would also imply forward integrability in the sense of Russo and Valois, \cite{Russo1993,Russo1995,Russo2000}, who consider the same limit in probability.


\begin{definition}
We say that the stochastic process $\sigma = \sigma(t,\omega),\,t\in[0,T],\,\omega\in\Omega$, is forward integrable over the interval $[0,T]$ with respect to W if there exists a process $I = I(\sigma,t), t\in[0,T]$, such that 
\begin{equation*}
\mathbb{E} \Big[ \sup_{t\in[0,T]} \Big| \int\limits_0^t \sigma(s) \frac{W(s+\epsilon) - W(s)}{\epsilon} ds - I(\sigma,t) \Big| \Big] \longrightarrow 0, \quad \text{as } \epsilon \to 0^+, 
\end{equation*}
In this case we write 
\begin{equation*}
I(\sigma,t) = \int\limits_0^t \sigma(s) d^- W(s), \; t\in[0,T], 
\end{equation*}
and call $I(\sigma,t)$ the forward integral of $\sigma$ with respect to W on $[0,t]$.
\label{Giulia_8_3}
\end{definition}

Lemma \ref{Giulia_8_9} shows that the forward integral is an extension of the It\^o integral.
%
\begin{lemma}
Let $\mathbb{G} = \{ \mathcal{G}_t, t\in [0,T]$ \} be a given filtration. Suppose that
\begin{enumerate}
\item W is a semimartingale with respect to the filtration $\mathbb{G}$,
\item $\sigma$ is $\mathbb{G}$-predictable and the It\^o integral $\int\limits_0^T \sigma(t) dW(t)$ exists (in $L^1(\mathbb{P})$),
\end{enumerate}
then $\sigma$ is forward integrable and 
\begin{equation*}
\int\limits_0^T  \sigma(t) d^-W(t) = \int\limits_0^T \sigma(t) dW(t).
\end{equation*}
\label{Giulia_8_9}
\end{lemma}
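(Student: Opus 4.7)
The plan is to reduce the $L^1$-limit that defines the forward integral to a statement about It\^o/semimartingale integrals via a stochastic Fubini argument, exploiting that under the hypotheses $\int_0^T \sigma(s)\,dW(s)$ is a perfectly good semimartingale integral. The starting observation is that for each $s\in[0,T]$ one can write $W(s+\epsilon)-W(s)=\int_s^{s+\epsilon} dW(r)$, so
\[
\int_0^t \sigma(s)\,\frac{W(s+\epsilon)-W(s)}{\epsilon}\,ds
\;=\;\int_0^t \sigma(s)\,\frac{1}{\epsilon}\int_s^{s+\epsilon} dW(r)\,ds.
\]
(For $r$ close to $T$ I would either assume $W$ is extended beyond $T$ as a $\mathbb{G}$-semimartingale or truncate the inner $dr$-integral at $T$; the resulting boundary contribution is $O(\epsilon)$ and vanishes in the limit.)

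Next, I would apply the stochastic Fubini theorem for semimartingale integration (e.g.\ Protter, Ch.~IV) to swap the order of integration. Since $W$ is a $\mathbb{G}$-semimartingale and $\sigma$ is $\mathbb{G}$-predictable with the required integrability, the deterministic kernel $(s,r)\mapsto \frac{1}{\epsilon}\mathbf{1}_{\{s\le r \le s+\epsilon\}}\mathbf{1}_{[0,t]}(s)$ is bounded and jointly measurable, so Fubini applies and gives
\[
\int_0^t \sigma(s)\,\frac{W(s+\epsilon)-W(s)}{\epsilon}\,ds
\;=\;\int_0^{T}\sigma_\epsilon^{\,t}(r)\,dW(r),
\qquad
\sigma_\epsilon^{\,t}(r):=\frac{1}{\epsilon}\int_{(r-\epsilon)\vee 0}^{\,r\wedge t}\sigma(s)\,ds.
\]

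With this identity in hand the problem becomes a standard approximation: $\sigma_\epsilon^{\,t}(r)$ is the Steklov-type average of $\sigma\mathbf{1}_{[0,t]}$, and as $\epsilon\to 0^+$ it converges $dr$-a.e.\ to $\sigma(r)\mathbf{1}_{[0,t]}(r)$ while remaining dominated (in the appropriate semimartingale sense) by quantities controlled by $\int_0^T|\sigma(s)|\,dA(s)$ plus the quadratic variation integrals used to ensure existence of $\int_0^T\sigma\,dW$ in $L^1(\mathbb{P})$. Writing the $\mathbb{G}$-canonical decomposition $W=M+A$ with $M$ a local martingale and $A$ of finite variation, I would bound the martingale part via Doob/BDG and the bounded-variation part by total variation, thereby passing to the limit and obtaining
\[
\int_0^{T}\sigma_\epsilon^{\,t}(r)\,dW(r)\;\longrightarrow\;\int_0^{t}\sigma(r)\,dW(r)
\quad\text{in } L^1(\mathbb{P}),\text{ uniformly in } t\in[0,T].
\]
This uniformity (required by Definition \ref{Giulia_8_3}, which puts the supremum inside the expectation) is the technically delicate point, and I would handle it by noting that the map $t\mapsto \int_0^{T}\sigma_\epsilon^{\,t}(r)\,dW(r)-\int_0^{t}\sigma(r)\,dW(r)$ is itself the sum of a martingale and a finite-variation process, so that $\E\sup_t|\cdot|$ is controlled by Doob's inequality plus the total variation of $A$ weighted by $\sigma$.

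The main obstacle I anticipate is not the identification of the limit, which is essentially an averaging argument, but the two technical steps surrounding it: justifying the stochastic Fubini interchange in the required generality (only $L^1$-integrability of $\int\sigma\,dW$ is assumed, not $L^2$), and upgrading pointwise-in-$t$ convergence to the uniform-in-$t$ $L^1$ convergence demanded by the Kohatsu-Higa-type definition of the forward integral. Both are handled by a localisation argument on the $\mathbb{G}$-semimartingale decomposition of $W$ together with the standard maximal inequalities, which is why the semimartingale hypothesis on $W$ under $\mathbb{G}$ is essential.
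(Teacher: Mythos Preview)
The paper does not actually prove this lemma; immediately after the statement it writes ``For proof we refer to, e.g.\ \cite[Lemma 8.9]{DiNunno2009}.'' So there is no in-paper argument to compare against, only the cited reference.

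Your approach---write $W(s+\epsilon)-W(s)=\int_s^{s+\epsilon}dW(r)$, apply stochastic Fubini, and recognise the resulting integrand as a Steklov average of $\sigma\mathbf{1}_{[0,t]}$---is the standard route and is essentially what one finds in the cited source. The identification of the limit and the observation that the semimartingale hypothesis on $W$ under $\mathbb{G}$ is what makes both Fubini and the maximal inequalities available are correct.

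One point deserves tightening. Your claim that ``the map $t\mapsto \int_0^{T}\sigma_\epsilon^{\,t}(r)\,dW(r)-\int_0^{t}\sigma(r)\,dW(r)$ is itself the sum of a martingale and a finite-variation process'' is not literally true as stated, because $\sigma_\epsilon^{\,t}(r)$ depends on $t$ through the upper limit $r\wedge t$ of the inner average, so the $dW$-integral is not an ordinary running stochastic integral in $t$. The clean fix is to split: for $r\le t$ one has $\sigma_\epsilon^{\,t}(r)=\bar\sigma_\epsilon(r):=\tfrac{1}{\epsilon}\int_{(r-\epsilon)\vee 0}^{r}\sigma(s)\,ds$, which is $t$-independent and $\mathbb{G}$-predictable, so $t\mapsto\int_0^t(\bar\sigma_\epsilon-\sigma)\,dW$ \emph{is} a genuine semimartingale and Doob/BDG apply directly; the remaining piece $\int_t^{(t+\epsilon)\wedge T}\sigma_\epsilon^{\,t}(r)\,dW(r)$ is a boundary term of width $\epsilon$ that one controls separately (and it is here that the localisation you mention is needed). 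With that decomposition your uniformity argument goes through.
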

\noindent For proof we refer to, e.g. \cite[Lemma 8.9]{DiNunno2009}.

Elementary processes are forward integrable, and have a natural interpretation as Riemann-like sums. Suppose the stochastic process $\sigma = \sigma(t,\omega)$, $t\in [0,T]$,$ \omega\in\Omega$, is elementary, meaning that it has the form 
\begin{equation}
\sigma(t,\omega) = \sum_{i=0}^{N-1} \sigma_{i}(\omega) \mathbf{1}_{(t_i,t_{i+1}]}(t),
\label{eq:elementary_function}
\end{equation}
where the $\sigma_{i}$ are bounded random variables and $0 = t_0 < t_1 \dots < t_{N} = T$. Then $\sigma$ is forward integrable, see  \cite[Remark 1]{Russo2007}, and  
\begin{equation}
\int\limits_0^t \sigma(s) d^-W(s) = \sum_{i=0}^{N-1} \sigma_{i}  \Big( W(t_{i+1} \wedge t) - W(t_i \wedge t) \Big), \quad t\in[0,T].
\label{eq:elementaryForward} 
\end{equation}
However, it is not obvious that one can approximate a general forward integrable function by elementary functions and in this way obtain also an approximation to the integral.


\begin{example}\label{example:sums_to_infinity}
Let $A=\big\{f \in L^{\infty}( [0,1]\times\Omega ) : f \text{ is c{\`a}gl{\`a}d}, \, |f(t,\omega)| \leq 1 \text{ for all } (t,\omega)\in [0,1]\times \Omega\}$. Then any $f\in A$ is forward integrable. 
But 
\begin{equation}
\sup_{f\in A} \mathbb{E} \Big[ \int_0^1 f(s) d^- W(s) \Big] = \infty.
\label{eq:infty}
\end{equation}
So even though $f$ is bounded, the forward integral with respect to $d^- W$ can have arbitrarily large expectations. This would not happen with It{\^o} integrals as it is a result of $W$ having infinite total variation and using anticipating information. 

To prove \eqref{eq:infty}, let $f_n$ be elementary functions of the form
\begin{equation*}
f_n = \sum_{j=0}^{n-1} \text{sign}\Big( W\big(\frac{j+1}{n}\big) - W\big(\frac{j}{n}\big) \Big) \mathbf{1}_{\{ t \in (\frac{j+1}{n}, \frac{j+1}{n}]\}},
\end{equation*}
where 
\begin{equation*}
\text{sign}(x) = \left\{
\begin{array}{rl}
-1 & \text{if } x < 0\\
1 & \text{if } x \geq 0
\end{array} \right. 
\end{equation*}
Then $f_n \in A$ and 
\begin{align*}
\int\limits_0^1 f_n(s) d^- W(s) &= \sum_{j=0}^{n-1} \text{sign}\Big( W\big(\frac{j+1}{n}\big) - W\big(\frac{j}{n}\big) \Big) \Big(W\big(\frac{j+1}{n}\big) - W\big(\frac{j}{n}\big) \Big) \\
&= \sum_{j=0}^{n-1} \Big| W\big(\frac{j+1}{n}\big) - W\big(\frac{j}{n}\big) \Big|.
\end{align*}
We have $\mathbb{E}[ |W(t)-W(s)| ] = \sqrt{(t-s) \cdot 2/\pi}$, (see for instance \cite{Leone1961}), so
\begin{align*}
\mathbb{E} \Big[ \int\limits_0^1 f_n(s) d^- W(s)\Big] &= \mathbb{E} \Big[  \sum_{j=0}^{n-1} \Big| W\big(\frac{j+1}{n}\big) - W\big(\frac{j}{n}\big) \Big| \Big] \\
&= \sqrt{n \cdot 2/\pi}  \longrightarrow \infty \quad \text{as} \; n\to \infty.
\end{align*}
Additionally we can remark that letting $g_n = n^{-1/4} f_n$, we would get that $g_n \to 0$ pointwise and is bounded by the forward integrable function $1$, thus proving that the dominated convergence theorem does not hold for forward integrals with respect to Brownian motions. 
\end{example}

Characterizing when integrals are finite or limits do not explode is non-trivial, and from the remark above we see that the boundedness of the integrand is not enough. Thus we have to be careful, even though a sequence of forward integrable functions converge in some suitable space, the corresponding forward integrals over these functions may not converge at all. See also the discussion in \cite[Section 1.8]{Protter2005}.

\begin{definition}
The forward integral 
\begin{equation*}
J(\theta,t) := \int\limits_0^t \int\limits_{\mathbb{R}_0} \theta(s,z) \tilde{N}(d^-s,dz), \quad t\in[0,T]
\end{equation*}
with respect to the Poisson random measure $\tilde{N}$ of a c\`agl\`ad random field $\theta(t,z,\omega)$, $t\in [0,T]$, $z\in\mathbb{R}_0$, $\omega\in\Omega$, is defined as
\begin{equation*}
J(\theta,t) = \lim_{m\to\infty} \int\limits_0^t \int\limits_{\mathbb{R}_0} \theta(s,z) \mathbf{1}_{U_m} \tilde{N}(ds,dz)
\end{equation*}
if the limit exists in $L^2(\mathbb{P})$. Here, $U_m, \, m=1,2,\dots,$ is an increasing sequence of compact sets $U_m \subset \mathbb{R}_0$ with $\nu(U_m) < \infty$ such that $lim_{m\to\infty} U_m = \mathbb{R}_0$.
\label{Giulia_15_1}
\end{definition}
Also in this case the forward integral is an extension of the It\^o integral \cite[Remark 15.2]{DiNunno2009}:
%
\begin{remark}
Let $\mathbb{G} = \{\mathcal{G}_t, \, t\in[0,T]\}$ be a given filtration such that
\begin{enumerate}
\item The process $\eta(t) = \int_0^t\int_{\mathbb{R}_0} z \tilde{N}(ds,dz), \, t\in [0,T]$, is a semimartingale with respect to $\mathbb{G}$.
\item The random field $\theta = \theta(t,z),\, t\in [0,T],\, z\in \mathbb{R}_0$, is $\mathbb{G}$-predictable.
\item The integral $\int_0^t\int_{\mathbb{R}_0} \theta(t,z) \tilde{N}(ds,dz)$ exists as a classical It\^o integral.
\end{enumerate}
Then $\theta$ is forward integrable and we have
\begin{equation*}
\int\limits_0^T \int\limits_{\mathbb{R}_0} \theta(s,z) \tilde{N}(d^-t,dz) = \int\limits_0^T \int\limits_{\mathbb{R}_0}  \theta(s,z) \tilde{N}(dt,dz).
\end{equation*}
\label{Giulia_15_2}
\end{remark}

\subsection{The It\^o formula for forward integrals}
\label{subsection_itoformula}

An It\^o formula for forward type integrals when the integrator is continuous was developed in \cite{Russo1995, Russo2000}. An It\^o formula for forward integrals with Poisson random measures is found in \cite{DiNunno2005}, both the results are also summarized in \cite{DiNunno2009}. In this paper we need a more general version that include processes of finite variation to guarantee the existence of solutions of \eqref{eq:investor_sde} and \eqref{eq:S_1}. The proof can be seen as a continuation of the one presented in \cite[Theorem 8.12]{DiNunno2009}, thus we only sketch the additional part. 

\begin{theorem} ~\newline
Let 
\begin{equation*}
d^- X(t) =  \mu(t) dt + \sigma(t) d^- W(t) + \int\limits_{\mathbb{R}_0} \theta(t,z) \tilde{N}(d^- t,dz) + d\zeta(t),
\end{equation*}
where 
\begin{itemize}
\item $\mu$ is a stochastic process satisfying $\int\limits_0^T \big| \mu(s) \big| ds < \infty \quad \mathbb{P}\text{-a.s.} $
\item $\sigma$ is forward integrable with respect to $W$.
\item $\theta$ and $|\theta|$ are forward integrable with respect to $\tilde{N}$ and $\theta$ satisfies \\ $\int\limits_0^T \int_{\mathbb{R}_0} \big| \theta(s,z)\big|^2 \nu(dz) \,ds < \infty \quad \mathbb{P}\text{-a.s.}$
\item $\zeta$ is a c{\`a}dl{\`a}g pure jump process of finite variation, with $\zeta(0) = 0$ and
\begin{align}
\mathbb{P} \big(\text{There exist } t\in[0,T] \text{ and } U \subset \mathbb{R}_0 & \text{ compact}\text{ such that } \nonumber \\
& \Delta \zeta(t) >0  \text{ and }  N\big( \Delta t,U \big) > 0 \big) = 0.
\label{eq:no_similar_jumps}
\end{align}
for all $U \subset \mathbb{R}_0$ compact. 
Here $N\big( \Delta t,U \big) := N\big( (0,t],U\big)-N\big( (0,t),U\big)$ and $\Delta \zeta(t) := \zeta(t)-\zeta(t-)$.
\end{itemize}

Assume $f\in C^2\big(\mathbb{R}\big)$ and let $Y(t) = f\big(X(t)\big)$. Then 
\begin{align*}
Y(t) &= Y(0) + \int\limits_0^t \Big[ f'\big(X(s-)\big) \mu(s) + \frac{1}{2} f''\big(X(s-)\big) \sigma^2(s) \Big] ds \\
&+ \int\limits_0^t \int\limits_{\mathbb{R}_0} \Big[ f\big(X(s-)+\theta(t,z)\big) - f(X(s-)\big) - f' \big(X(s-)\big)  \theta(s,z) \Big] \nu(dz)\, ds \\
&+ \int\limits_0^t  f'\big(X(s-)\big) \sigma(s) d^-W(s) + \int\limits_0^t \int\limits_{\mathbb{R}_0} \Big[ f\big(X(s-)+\theta(s,z)\big) - f(X(s-)\big)\Big] \tilde{N}(d^-s,dz) \\
&+ \sum_{\substack{0<s<t \\ \Delta \zeta(s) \neq 0}} \Big[ f\big(X(s-) + \Delta \zeta(s) \big) - f\big(X(s-) \big) \Big].
\end{align*}
\label{teorem:newItoFormula}
\end{theorem}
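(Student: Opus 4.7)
The plan is to reduce to the forward It\^o formula without the finite variation term (that is, \cite[Theorem 8.12]{DiNunno2009} together with its Poisson-measure counterpart from \cite{DiNunno2005}) by localising between the jumps of $\zeta$. Write $X(t) = X^c(t) + \zeta(t)$, where $X^c$ collects the $dt$-integral, the forward Wiener integral and the forward Poisson integral. Since $\zeta$ is pure-jump and of finite variation, $\sum_{0 < s \le T} |\Delta\zeta(s)| < \infty$ $\Prob$-a.s.

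First I would approximate $\zeta$ by its large-jump part
$$\zeta_\varepsilon(t) := \sum_{0 < s \le t,\ |\Delta\zeta(s)| > \varepsilon} \Delta\zeta(s),$$
which by finite variation has only finitely many jump times $T_1 < \dots < T_{n(\varepsilon)}$ in $[0,T]$ (set $T_0 := 0$, $T_{n(\varepsilon)+1} := T$). On each open sub-interval $(T_{i-1}, T_i)$ the process $X^c + \zeta_\varepsilon$ differs from $X^c$ by a pathwise constant, so the known forward It\^o formula applies verbatim and yields the $ds$, $\nu(dz)\,ds$, $d^-W$ and $\tilde N(d^-s,dz)$ contributions in the statement, with $f, f', f''$ read off at $X(s-)$. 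Summing over the finitely many sub-intervals and inserting the telescopic jump terms $f(X(T_i-) + \Delta\zeta(T_i)) - f(X(T_i-))$ at each $T_i$ yields the claimed formula with $\zeta$ replaced by $\zeta_\varepsilon$.

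Next I would pass $\varepsilon \to 0^+$. Since $\sup_{t \le T} |\zeta(t) - \zeta_\varepsilon(t)| \le \sum_{0<s\le T,\,|\Delta\zeta(s)| \le \varepsilon} |\Delta\zeta(s)| \to 0$ $\Prob$-a.s., the left-hand side $f(X^c(t) + \zeta_\varepsilon(t))$ converges to $f(X(t))$ by continuity of $f$. The $ds$, $\nu(dz)\,ds$ and forward integral terms do not depend on $\varepsilon$. For the jump sum, the a.s.\ boundedness of the range of $\{X(s-) : s \in [0,T]\}$ gives a pathwise Lipschitz constant $C$ for $f$, so $|f(X(s-) + \Delta\zeta(s)) - f(X(s-))| \le C |\Delta\zeta(s)|$, and $\sum_{s \le T} |\Delta\zeta(s)| < \infty$ allows dominated convergence inside the sum to produce the full jump series.

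The main obstacle, and the reason assumption \eqref{eq:no_similar_jumps} is imposed, is to rule out simultaneous jumps of $\zeta$ and $N$: otherwise the pre-jump value $X(s-)$ used in the $\zeta$-correction would not match the pre-jump value implicit in the forward Poisson term, and the two contributions would overlap in a way that is hard to disentangle under forward integration, which (unlike the It\^o integral) is defined only via an anticipating $\varepsilon$-shift regularisation and has no general It\^o isometry to separate orthogonal jump components. A secondary technical point is that forward integrability of $\sigma$ and $\theta$ on $[0,T]$ must restrict to forward integrability on each $[T_{i-1}, T_i]$ with integrals agreeing with the restrictions of the global ones; this follows from the locality in time of Definitions \ref{Giulia_8_3} and \ref{Giulia_15_1}, applied pathwise on each sub-interval.
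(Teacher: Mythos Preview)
Your approach is correct and close in spirit to the paper's, but you localise differently. The paper truncates the Poisson measure to compact sets $U_m$ (so that $N$ restricted to $U_m$ has only finitely many jumps), enumerates the jumps of the resulting $X_m$ --- separating them into $\zeta$-jumps and $N$-jumps via assumption \eqref{eq:no_similar_jumps} --- and on each inter-jump interval applies the \emph{continuous} forward It\^o formula \cite[Theorem~8.12]{DiNunno2009}; the limit $m\to\infty$ then recovers the full Poisson integral. You instead keep the full Poisson part, truncate $\zeta$ to its large jumps $\zeta_\varepsilon$, and invoke the already-established forward It\^o formula \emph{with} the Poisson term as a black box on each interval between the finitely many $T_i$; your limit is $\varepsilon\to 0$.

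What each buys: your route is more modular --- it uses the jump version of the forward It\^o formula rather than reproving it --- and your $\varepsilon$-truncation explicitly handles the case where $\zeta$ has infinitely many jumps (possibly with accumulation points), a point the paper's enumeration of the $\alpha_i$ glosses over. The paper's route, by reducing all the way to the continuous formula, makes the role of \eqref{eq:no_similar_jumps} slightly more transparent: it is precisely what allows the unique splitting of the jump sequence into $\alpha_i^\zeta$ and $\alpha_i^N$. Both arguments share the same informal step, namely applying a forward It\^o formula on a random sub-interval $(T_{i-1},T_i)$ (respectively $(\alpha_{i-1},\alpha_i)$) with a random initial shift; the paper is explicit that it only sketches this, and your closing remark on locality of Definitions~\ref{Giulia_8_3} and~\ref{Giulia_15_1} is at the same level of detail.
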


\begin{remark}
Condition \eqref{eq:no_similar_jumps} is for instance fulfilled if $N$ and $\zeta$ are independent.
\end{remark}

\begin{proof}
Let 
\begin{equation*}
X_m(t) = x+ \int\limits_0^t \mu(s) ds + \int\limits_0^t \sigma(s) d^- W(s) + \int\limits_0^t \int\limits_{\mathbb{R}_0} \mathbf{1}_{U_m}(z) \theta(s,z) \tilde{N}(d^- s,dz) + \zeta(t),
\end{equation*}
where $\mathbf{1}_{U_m}$ is as in Definition \ref{Giulia_15_1}. We denote $\alpha_i$, $i=1,2 \dots$ the times of the jumps of $X_m$. By condition \eqref{eq:no_similar_jumps} we can uniquely ($\mathbb{P}$-a.s.) divide the sequence $\alpha_i$ by the jumps of either $\zeta$ or $\mathbf{1}_{U_m}(z) N(dt,dz)$ as $\alpha_i^\zeta$ and $\alpha_i^N$. We formally set $\alpha_0 = \alpha_0^\zeta = \alpha_0^N = 0$.

Then
\begin{align*}
f\big(X_m(t)\big) - f\big(X_m(0)\big) =&\; \sum_i \Big[ f\big(X_m(\alpha_i\wedge t)\big) - f\big(X_m(\alpha_i\wedge t-)\big)\Big]  \\
&+ \sum_i \Big[ f\big(X_m(\alpha_i\wedge t-)\big) - f\big(X_m(\alpha_{i-1}\wedge t)\big)\Big] \\
=&\; \sum_{\alpha_i^\zeta\leq t} \Big[ f\big(X_m(\alpha_i^\zeta )\big) - f\big(X_m(\alpha_i^\zeta-)\big)\Big] \\
&+ \sum_{\alpha_i^N \leq t} \Big[ f\big(X_m(\alpha_i^N)\big) - f\big(X_m(\alpha_i^N-)\big)\Big] \\
&+ \sum_{\alpha_i\leq t} \Big[ f\big(X_m(\alpha_i-)\big) - f\big(X_m(\alpha_{i-1}-)\big)\Big] \\
=&\; J_1(t) + J_2(t) + J_3(t),
\end{align*}
with
\begin{equation*}
J_1(t) =  \sum_{\substack{0<s<t \\
\Delta \zeta(s) \neq 0}} \Big[ f\big(X_m(s-) + \Delta \zeta(s) \big) - f\big(X_m(s-) \big) \Big]
\end{equation*}
and
%
\begin{align}
J_2(t) =& \sum_i \big[ f\big(X_m(\alpha_i^N\big) - f\big(X_m(\alpha_i^{N} - \big) \big] \mathbf{1}_{\{\alpha_i^N\leq t\}} \label{eq:itosomething} \\
=&\; \int\limits_0^t \int\limits_{\mathbb{R}_0} \Big[ f\big(X_m(s-)+\theta(s,z)\big) - f(X_m(s-)\big)\Big] N(ds,dz) \nonumber \\
=&\; \int\limits_0^t \int\limits_{\mathbb{R}_0} \Big[ f\big(X_m(s-)+\theta(s,z)\big) - f(X_m(s-)\big)\Big] \tilde{N}(ds,dz) \nonumber \\
&+ \int\limits_0^t \int\limits_{\mathbb{R}_0} \Big[ f\big(X_m(s-)+\theta(s,z)\big) - f(X_m(s-)\big)\Big] \nu(dz)\, ds . \nonumber
\end{align}
For the elements of the sum in $J_3(t)$ we use \cite[Theorem 8.12]{DiNunno2009}:
\begin{align*}
J_3(t) =&\; \sum_i \bigg[ \int\limits_{\alpha_{i-1}\wedge t}^{\alpha_i\wedge t} \Big[  f'\big(X_m(s-)\big) \mu(s) ds -  \int\limits_{\mathbb{R}_0} f' \big(X_m(s-)\big) \mathbf{1}_{U_m} \theta(s,z) \nu(dz)\,\Big]  ds \\
&+ \int\limits_{\alpha_{i-1}\wedge t}^{\alpha_i\wedge t}  f'\big(X_m(s-)\big) \sigma(s) d^-W(s) + \int\limits_{\alpha_{i-1}\wedge t}^{\alpha_i\wedge t}  f''\big(X_m(s-)\big) \sigma^2(s) ds \bigg] \\
=&\; \int\limits_0^t \Big[ f'\big(X_m(s-)\big) \mu(s) + \frac{1}{2}f''\big(X_m(s-)\big) \sigma^2(s) - \int\limits_{\mathbb{R}_0} f' \big(X(s-)\big) \mathbf{1}_{U_m} \theta(s,z) \nu(dz)\, \Big] ds \\
&+ \int\limits_0^t  f'\big(X_m(s^-)\big) \sigma(s) d^-W(s).
\end{align*}
Adding $J_1$, $J_2$ and $J_3$ together and letting $m\to\infty$ the result follows.

\end{proof}

\section[Local maximums]{Optimization problem: local maxima}
\label{section:localMaximums}

Now we are ready to tackle directly our stated optimization problem \eqref{OP}. First we give a description of the set of the investor's admissible portfolios.


\begin{definition} The set $\mathcal{A}_{\invFF}$ of \emph{admissible portfolios} consists of stochastic processes $\pi = \pi(t,\omega),\, t\in [0,T],\, \omega \in \Omega$, such that
\begin{enumerate}
\renewcommand{\theenumi}{\roman{enumi})}
\renewcommand{\labelenumi}{\theenumi}
\item $\pi$ is c\`agl\`ad and $\invFF$-adapted,
\label{list:pi_def1}
\item for every $\pi \in \mathcal{A}_{\invFF}$, there exists $\epsilon_\pi \ > 0$ such that for all t, 
\begin{equation}
\pi(t)\kappa(t) > -1 + \epsilon_\pi
\label{eq:pi_limitation}
\end{equation}
and
\begin{equation}
\pi(t)\theta(t,z) > -1+\epsilon_\pi,
\label{eq:pi_limitation2}
\end{equation}
\label{list:pi_def2}
\item 
\begin{equation*}
\mathbb{E} \Big[ \int\limits_0^T \big| \big(\mu(s)-\rho(s)\big) \big|\big| \pi(s) \big| + \sigma^2(s)\pi^2(s) ds \Big] <  \infty
\end{equation*}
and
\begin{equation*}
\mathbb{E} \Big[ \int_0^T \int_{\mathbb{R}_0} \big| \theta(s,z) \pi(s) \big|^2 \nu(dz)\, ds \Big] < \infty,
\end{equation*}
\label{list:pi_def3}
\item $\pi \sigma$ is c\`agl\`ad and forward integrable with respect to W,
\label{list:pi_def4}
\item $\pi \theta$, $\ln\big(1+\pi\theta\big)$ and $\frac{\pi \theta}{1+\pi  \theta}$ are c\`agl\`ad and forward integrable with respect to $\tilde{N}$.
\label{list:pi_def5}
\end{enumerate}
The subset $\mathcal{A}_{\invFF}^e$ of $\mathcal{A}_{\invFF}$ consists of all admissible portfolios that are representable as elementary integrands- see \eqref{eq:elementary_function}.
\label{definition:Allowable_controls1}
\end{definition}

In particular we note that condition \ref{list:pi_def1} ensures that the portfolio choices correspond to the investors knowledge and that condition \ref{list:pi_def2} ensures that the investor never reaches zero wealth from the jumps of $H$ or $\tilde{N}$. 
In addition \ref{list:pi_def2}  means that fractions of the form $\frac{1}{1+\kappa\pi}$ are bounded, which is implicitly used in some forthcoming equations. 

Note that if 
\begin{equation*}
\pi(s,\omega) = \alpha(\omega) \mathbf{1}_{(t,t+h]}(s),
\end{equation*}
where $\alpha$ is a bounded $\invF_t$-measurable random variable, then $\pi \in \mathcal{A}_\mathbb{\invFF}^e \subset \mathcal{A}_\mathbb{\invFF}$ as long as \eqref{eq:pi_limitation} and \eqref{eq:pi_limitation2} are satisfied.

\medskip
\noindent
As announced we are interested in the problem 
\begin{equation}
\sup_{\pi\in\mathcal{A}_{\invFF}} \mathbb{E} \big[ U \big( X_{\pi}(T) \big) \big]. 
\label{eq:Levy_opt_prob} 
\end{equation}
(We recall that $X_\pi(T) = \tilde{X}_{\pi}(\tau) e^{\int_\tau^T \rho(s) ds}$ is the value of the investor's wealth at $T$ and the definition of $\tilde{X}_{\pi}$ is in \eqref {eq:investor_sde}).
We will search for solutions to \eqref{eq:Levy_opt_prob} that are optimal in the sense that they cannot be improved by small perturbations. 

\begin{definition}We say that the stochastic process $\pi$ is a \emph{local maximum} for the problem \eqref{eq:Levy_opt_prob} if 
\begin{equation}
\mathbb{E} \big[U\big( X_{\pi +y \beta} (T) \big) \big] \leq \mathbb{E} \big[U\big( X_{\pi} (T) \big) \big] 
\label{eq:loxal_max}
\end{equation}
for all bounded $\beta \in \mathcal{A}_{\invFF}$ and $|y| < \delta_{\pi,\beta}$ for some $\delta_{\pi,\beta}>0$ that may depend on $\beta$. 
We say that $\pi$ is a \emph{weak local maximum} for \eqref{eq:Levy_opt_prob} if \eqref{eq:loxal_max} is true for all $\beta\in\mathcal{A}_{\invFF}^e$.
\label{Definition:localMax}
\end{definition}

From the terminology point of view, when we say that a \emph{property holds under $(\mathbb{Q}, \invFF)$}, we mean that the property holds under the measure $\mathbb{Q}$ with respect to the filtration $\invFF$. Moreover, we say that a stochastic process $Y$ has the \emph{martingale property} under  $(\mathbb{Q}, \invFF)$ if 
\begin{equation*}
\mathbb{E}_\mathbb{Q} \big[ Y(t+h) - Y(t) \big| \invF_t \big] = 0
\end{equation*}
for all $0<t<t+h<\infty$. We stress that $Y$ does not need to be a $(\mathbb{Q}, \invFF)$-martingale despite having the martingale property under $(\mathbb{Q}, \invFF)$. In fact no statement is given about $Y$ being adapted to $\invFF$.

Following the techniques in \cite{Biagini2005,DiNunno2006}, we consider pertubations of stochastic controls to find necessary and sometimes sufficient criteria to characterize local maximums. We will need the following assumption for a differentiable utility function $U$.

\smallskip
\noindent \textbf{Assumption \aui.} We say that assumption \aui holds for $\pi \in \mathcal{A}_\mathbb{\invFF}$ if
\renewcommand{\theenumi}{\roman{enumi})}
\renewcommand{\labelenumi}{\theenumi}
\begin{enumerate}
\item $\mathbb{E}[U(X_\pi(T))] < \infty$,
\label{list:pi_def7}
\item $0 < \mathbb{E}[U'(X_{\pi}(T) ) X_{\pi}(T)] < \infty$, with $U'(x) = \frac{dU}{dx}(x)$,
\label{list:pi_def6}
\item For all $\beta \in \mathcal{A}_{\invFF}$ with $\beta$ bounded, there exists $\delta_{\pi,\beta}>0$ that may depend on $\beta$ such that the family
 \begin{equation}
  \big\{U'\big( X_{\pi+y\beta}(T) \big) X_{\pi+y\beta}(T) |\uniproc(y,\beta,\pi)|\big \}_{y\in(-\delta_{\pi,\beta},\delta_{\pi,\beta})}
  \label{eq:uniformIntegrable} 
 \end{equation}
 is uniformly integrable, where
\begin{align}
 \uniproc( & y, \beta,\pi) := \int\limits_0^{\tau} \beta(s) \big[ \mu(s) - \rho(s) - \big(\pi(s)+y\beta(s)\big) \sigma^2(s) \big] ds \nonumber \\
&+ \int\limits_0^{\tau} \int\limits_{\mathbb{R}_0} \Big[ \frac{\beta(s)\theta(s,z)} {1+\big(\pi(s)+y\beta(s)\big)\theta(s,z) } - \beta(s)\theta(s,z) \Big] \nu(dz)\,ds \nonumber \\
&+ \int\limits_0^{\tau} \beta(s) \sigma(s) d^- W (s) + \int\limits_0^{\tau} \int\limits_{\mathbb{R}_0} \frac{\beta(s)\theta(s,z)}{1+\big(\pi(s)+y\beta(s)\big)\theta(s,z) } \tilde{N}(d^-z,ds)  \nonumber \\
&+ \int\limits^{\tau }_0 \frac{\beta(s)\kappa(s)}{1+\kappa(s)\big(\pi(s)+y\beta(s)\big)} dH(s) .
\label{eq:Nybpi}
\end{align} 
\label{list:pi_def8}
\end{enumerate}

\medskip
Assumption \aui depends strongly on the utility function $U$. 
Condition \ref{list:pi_def7} is related to the optimization problem \eqref{eq:Levy_opt_prob} and \ref{list:pi_def6} is used in the definition of \eqref{eq:Fpi}.
Condition \ref{list:pi_def8}, uniform integrability,  is the minimal condition for taking limits under the integral sign.
It is unfortunate in that it stems from mathematical rather than modeling necessities, but we cannot do without it. 
There is a good discussion in when uniform integrability conditions like Assumption \aui is fulfilled in \cite[section 16.5]{DiNunno2009}. The conclusions from \cite[section 16.5]{DiNunno2009} can be transferred to our model. 

\begin{theorem}
Suppose the utility function $U$ is increasing and differentiable, $\pi \in \mathcal{A}_{\invFF}$ and \aui holds.
\begin{enumerate}
\renewcommand{\theenumi}{\roman{enumi})}
\renewcommand{\labelenumi}{\theenumi}
\item If $\pi$ is a local maximum for \eqref{eq:Levy_opt_prob}, then the process $M_{\pi}(t)$, $t\in [0,T]$, has the martingale property under  $(\mathbb{Q}_{\pi}, \invFF )$. Where $M_\pi$ is defined as
\begin{align}
M_{\pi}(t) :=&\;  \int\limits_0^{t\wedge\tau} \Big[ \mu(s) -\rho(s) - \pi(s) \sigma^2(s) -\int\limits_{\mathbb{R}_0} \frac{ \pi(s)\theta^2(s,z)}{1+\pi(s)\theta(s,z)} \nu(dz) \Big]  ds \nonumber \\
&+ \int\limits_0^{t \wedge\tau} \sigma(s) d^- W (s) + \int\limits_0^{t\wedge\tau}\int\limits_{\mathbb{R}_0} \frac{\theta(s,z)}{1+\pi(s)\theta(s,z)}\tilde{N}(d^-s,dz) \nonumber \\
&+ \int\limits^{t\wedge\tau}_0 \frac{\kappa(s)}{1+\kappa(s)\pi(s)} dH(s),
\label{eq:MpiDefinition}
\end{align}
and the measure $\mathbb{Q}_{\pi}$ is defined by $d\mathbb{Q}_{\pi} = F_{\pi}(T) d\mathbb{P}$, with 
\begin{equation}
F_{\pi}(T) = \frac{U' \big( X_{\pi}(T) \big) X_{\pi}(T) }{\mathbb{E} \big[ U' \big( X_{\pi} \big) X_{\pi}(T) \big] }.
\label{eq:Fpi}
\end{equation}
\label{list:localMax1}
\item Suppose the mappings
\begin{equation}
y \to \mathbb{E} \big[ U\big (X_{\pi+y\beta}(T) \big) \big],\quad y\in(-\delta_{\pi,\beta},\delta_{\pi,\beta}), \;\; (\delta_{\pi,\beta} > 0 )
\label{eq:concaveCondition}
\end{equation}
are concave for all controls $\beta \in \mathcal{A}_{\invFF}^e$ and $|y| < \delta$. 
If $M_\pi$ has the martingale property under $(\mathbb{Q}_{\pi}, \invFF )$ then $\pi$ is a weak local maximum for \eqref{eq:Levy_opt_prob}
\label{list:localMax2}
\item Suppose $M_\pi$ is $\invFF$-adapted and the conditions in \ref{list:localMax2} are satisfied. If $M_\pi$ is a $(\mathbb{Q}_{\pi}, \invFF )$-martingale then $\pi$ is a local maximum for \eqref{eq:Levy_opt_prob}.
\label{list:localMax3}
\end{enumerate}
\label{theorem:localMax}
\end{theorem}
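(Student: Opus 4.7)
The backbone of the argument is the standard variational approach: differentiate the objective along admissible perturbations and read off the first-order condition. Concretely, I would start from the explicit formula \eqref{eq:investor_sde_solution} for $X_\pi(T)=\tilde X_\pi(\tau)e^{\int_\tau^T\rho\,ds}$ and compute, for bounded $\beta\in\mathcal{A}_{\mathbb{G}}$,
\begin{equation*}
\frac{d}{dy}\ln X_{\pi+y\beta}(T) \;=\; \uniproc(y,\beta,\pi),
\end{equation*}
term by term. The $ds$-, $dW$- and $\tilde N$-integrals differentiate directly; the $\tilde N$-term in particular produces the factor $\beta\theta/(1+(\pi+y\beta)\theta)$ because $\ln(1+(\pi+y\beta)\theta)$ is differentiated under the forward integral (admissible since the integrand is c\`agl\`ad and bounded by condition \ref{list:pi_def2}) and the compensating Lebesgue term contributes the $-\beta\theta\,\nu(dz)\,ds$. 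The $dH$-term produces the jump-by-jump differentiation $\beta\kappa/(1+\kappa(\pi+y\beta))$, where the integrability in \eqref{eq:finite_default_integrals} together with the $\epsilon_\pi$-bound legitimises interchanging $d/dy$ with the pathwise sum. In particular, $\frac{d}{dy}X_{\pi+y\beta}(T)=X_{\pi+y\beta}(T)\uniproc(y,\beta,\pi)$.

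For part \ref{list:localMax1}, Assumption \aui$\ $ \ref{list:pi_def8} gives uniform integrability of $U'(X_{\pi+y\beta}(T))X_{\pi+y\beta}(T)\,|\uniproc(y,\beta,\pi)|$ for $|y|<\delta_{\pi,\beta}$, which is precisely what is needed to differentiate under the expectation:
\begin{equation*}
\frac{d}{dy}\mathbb{E}\bigl[U(X_{\pi+y\beta}(T))\bigr]\Big|_{y=0} \;=\; \mathbb{E}\bigl[U'(X_\pi(T))X_\pi(T)\,\uniproc(0,\beta,\pi)\bigr].
\end{equation*}
If $\pi$ is a local maximum the left-hand side is $0$, so dividing by the positive constant $\mathbb{E}[U'(X_\pi(T))X_\pi(T)]$ from \aui \ref{list:pi_def6} yields $\mathbb{E}_{\mathbb{Q}_\pi}[\uniproc(0,\beta,\pi)]=0$ for every bounded $\beta\in\mathcal{A}_{\mathbb{G}}$. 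Now one specialises to $\beta(s)=\alpha\,\ind_{(t,t+h]}(s)$ with $\alpha$ bounded and $\mathcal{G}_t$-measurable (which is admissible by \ref{list:pi_def2}). A short algebraic simplification — using $\frac{\beta\theta}{1+\pi\theta}-\beta\theta=-\frac{\beta\pi\theta^2}{1+\pi\theta}$ to merge the two $\nu$-terms — gives $\uniproc(0,\beta,\pi)=\alpha\bigl(M_\pi(t+h)-M_\pi(t)\bigr)$. Thus $\mathbb{E}_{\mathbb{Q}_\pi}[\alpha(M_\pi(t+h)-M_\pi(t))]=0$ for all such $\alpha$, which is exactly the martingale property of $M_\pi$ under $(\mathbb{Q}_\pi,\mathbb{G})$.

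For part \ref{list:localMax2}, define $g_\beta(y):=\mathbb{E}[U(X_{\pi+y\beta}(T))]$ on $(-\delta_{\pi,\beta},\delta_{\pi,\beta})$. It is concave by hypothesis; its derivative at $y=0$ equals $\mathbb{E}[U'(X_\pi(T))X_\pi(T)\,\uniproc(0,\beta,\pi)]$ by the computation above. For any elementary $\beta\in\mathcal{A}_{\mathbb{G}}^e$, writing $\beta=\sum\alpha_i\ind_{(t_i,t_{i+1}]}$ with each $\alpha_i$ bounded and $\mathcal{G}_{t_i}$-measurable, the martingale property applied to each block gives $\mathbb{E}_{\mathbb{Q}_\pi}[\uniproc(0,\beta,\pi)]=0$, so $g_\beta'(0)=0$; concavity then forces $g_\beta(y)\le g_\beta(0)$, proving $\pi$ is a weak local maximum. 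Part \ref{list:localMax3} extends this to all bounded $\beta\in\mathcal{A}_{\mathbb{G}}$: when $M_\pi$ is $\mathbb{G}$-adapted and an honest $(\mathbb{Q}_\pi,\mathbb{G})$-martingale, for a general bounded $\beta\in\mathcal{A}_{\mathbb{G}}$ one approximates it by elementary $\beta_n\in\mathcal{A}_{\mathbb{G}}^e$ and expresses $\mathbb{E}_{\mathbb{Q}_\pi}[\uniproc(0,\beta_n,\pi)]$ as $\mathbb{E}_{\mathbb{Q}_\pi}[\int_0^\tau \beta_n(s)\,dM_\pi(s)]$; the martingale property of $M_\pi$ (not merely the weaker property) yields vanishing of this expectation in the limit, so again $g_\beta'(0)=0$ and concavity concludes.

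The delicate step, and the main obstacle, is the justification of differentiation under the expectation and of the limit in part \ref{list:localMax3}. Assumption \aui is custom-made for the first issue, but it must be checked that the right-hand side of $\frac{d}{dy}X_{\pi+y\beta}(T)=X_{\pi+y\beta}(T)\uniproc(y,\beta,\pi)$ is indeed dominated by an integrable function along an interval of $y$'s, which ultimately relies on the $\epsilon_\pi$-uniform bounds \eqref{eq:pi_limitation}--\eqref{eq:pi_limitation2} controlling the denominators $1+(\pi+y\beta)\theta$ and $1+(\pi+y\beta)\kappa$. For part \ref{list:localMax3} the subtlety is that $M_\pi$ involves forward integrals against $W$ and $\tilde N$, so interpreting $\int_0^\tau\beta(s)\,dM_\pi(s)$ for non-elementary $\beta$ and passing to the limit in $\beta_n\to\beta$ is where the $\mathbb{G}$-adaptedness of $M_\pi$ and the true martingale hypothesis are essential — they upgrade the forward integrals to genuine It\^o-type integrals via Lemma \ref{Giulia_8_9} and Remark \ref{Giulia_15_2}, which then allows standard It\^o isometry arguments to close the approximation.
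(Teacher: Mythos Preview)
Your approach is correct and mirrors the paper's proof closely: differentiate under the expectation via \aui, specialise to indicator perturbations $\beta=\alpha\ind_{(t,t+h]}$ to extract the martingale property, and for the converse use concavity plus $g_\beta'(0)=0$. One small correction for Part~\ref{list:localMax3}: the limit $\beta_n\to\beta$ does not go through Lemma~\ref{Giulia_8_9}/Remark~\ref{Giulia_15_2} (those concern $W$ and $\tilde N$ as integrators, and you are not assuming $W$ is a $\mathbb{G}$-semimartingale here); rather, since $M_\pi$ is itself assumed to be a $(\mathbb{Q}_\pi,\mathbb{G})$-martingale, $\int_0^T\beta_n\,dM_\pi\to\int_0^T\beta\,dM_\pi$ in probability by ordinary semimartingale integration, and Assumption~\aui\ then supplies the $\mathbb{Q}_\pi$-integrability needed to pass the expectation to the limit.
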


\begin{proof}
\textbf{Part \ref{list:localMax1}} If $\pi$ is a local maximum, then for all bounded $\beta$ we have
\begin{equation}
0 = \frac{d}{dy} \mathbb{E} \big[ U \big( X_{\pi+y \beta}(T) \big) \big]_{|y=0} = \mathbb{E} \big[ U'\big( X_{\pi+y\beta}(T) \big) \frac{d}{dy} X_{\pi+y\beta}(T) \big]_{|y=0}.
\label{eq:interchanged}
\end{equation}
Here assumption \aui is used, see for instance \cite[Appendix A]{Dudley1999}. With some calculations we obtain
\begin{align}
0 =&\; \mathbb{E} \bigg[ U'\big(  X_{\pi}(T)\big) X_{\pi}(T) \Big\{ \int\limits_0^{\tau} \beta(s) \big[\mu(s) - \rho(s) - \pi(s) \sigma^2(s) \big] ds  \nonumber \\
&+\int\limits_0^{\tau} \int\limits_{\mathbb{R}_0} \beta(s) \frac{- \pi(s) \theta^2(s,z)} {1+\pi(s)\theta(s,z) }  \nu(dz)\,ds + \int\limits_0^{\tau} \beta(s) \sigma(s)  d^- W (s) \nonumber \\
&+ \int\limits_0^{\tau} \int\limits_{\mathbb{R}_0} \frac{\beta(s)\theta(s,z)}{1+\pi(s)\theta(s,z)} \tilde{N}(d^-s,dz)  +\int\limits^{\tau}_0 \frac{\beta(s)\kappa(s)}{1+\pi(s)\kappa(s)} dH(s) \Big\} \bigg] \nonumber \\
=&\; \mathbb{E} \bigg[U'\big(X_\pi(T)\big) X_\pi(T) \uniproc(0,\beta,\pi) \bigg].
\label{eq:Levy_varPr1}
\end{align}
We now let $\beta(s) = \alpha \mathbf{1}_{(t, t+h]}(s)$, where $\alpha$ is a $\invF_t$-measurable bounded random variable. We can put $\alpha$ outside the forward integrals, see for instance \cite[Lemma 8.7]{DiNunno2009} and \cite[Remark 15.3]{DiNunno2009} to get
\begin{align}
\mathbb{E} & \bigg[ U'(X_{\pi}(T)) X_{\pi}(T) \Big\{ \int\limits_{t\wedge\tau} ^{(t+h)\wedge\tau} \Big[\mu(s) - \rho(s) - \pi(s) \sigma^2(s) \nonumber \\
&- \int\limits_{\mathbb{R}_0} \frac{\pi(s) \theta^2(s,z)} {1+\pi(s)\theta(s,z) } \nu(dz) \Big] ds + \int\limits_{t\wedge\tau}^{(t+h)\wedge\tau} \sigma(s)  d^- W (s) \nonumber \\
&+ \int\limits_{t\wedge\tau}^{(t+h)\wedge\tau} \int\limits_{\mathbb{R}_0} \frac{\theta(s,z)}{1+\pi(s)\theta(s,z)} \tilde{N}(d^-z,ds) + \int\limits^{(t+h)\wedge\tau}_{t\wedge\tau}  \frac{\kappa(s)}{1+\kappa(s)\pi(s)} dH(s) \Big\}   \alpha \bigg] = 0.
\label{eq:Levy_varPr2}
\end{align}
%
Hence we conclude that
\begin{equation*}
\mathbb{E} \big[ F_{\pi}(T)\big( M_{\pi}(t+h)-M_{\pi}(t) \big) |\invF_t \big] = 0
\end{equation*}
with $F_{\pi}(T)$ and $M_{\pi}$ defined as in \eqref{eq:Fpi} and \eqref{eq:MpiDefinition} respectively. Since $\mathbb{E}[ F_{\pi}(T) ] = 1$, we can define a new probability measure on $(\Omega, \mathcal{A})$ by 
\begin{equation}
d\mathbb{Q}_{\pi} = F_{\pi}(T)d\mathbb{P}.
\label{eq:changeOfMeasure}
\end{equation} 
We thus have that if $\pi$ is a local maximum, $M_{\pi}$ has the martingale property under $( \mathbb{Q}_{\pi},\invFF)$. 

\medskip \noindent \textbf{Part \ref{list:localMax2}}. 
Suppose $M_{\pi}$ has the martingale property under $(\mathbb{Q}_{\pi},\invFF) $. Then, for $0<t<t+h<T$,
\begin{equation*}
\mathbb{E}_{\mathbb{Q}_{\pi}} \big[ M_{\pi}(t+h)-M_{\pi}(t) \big| \invF_t \big] = 0,
\end{equation*}
or, equivalently, that for all bounded $\invF_t$-measurable random variables $\alpha$ we have 
\begin{equation*}
0 = \E_{\mathbb{Q}_{\pi}} \Big[ \alpha \big( M_{\pi}(t+h)-M_{\pi}(t) \big) \Big| \invF_t \Big] = \E_{\mathbb{Q}_{\pi}} \Big[ \int\limits_0^T \alpha \ind_{(t,t+h]}(s) d^- M_\pi (s) \Big| \invF_t \Big].
\end{equation*}
Taking linear combinations we get that 
\begin{equation}
0 = \E \Big[ U'(X_\pi(T) ) X_\pi(T)  \int\limits_0^T \beta(s) d^- M_\pi (s) \Big]
\label{eq:weak_localmax_argument}
\end{equation}
for any $\beta \in \mathcal{A}_{\invFF}^e$. Since the mapping  $y \to \mathbb{E} \big[ U(X_{\pi+y\beta}(T) \big]$ is concave on $|y| < \delta_{\pi,\beta}$ then $\pi$ is a weak local maximum.

\medskip \noindent \textbf{Part \ref{list:localMax3}}. The conditions of \ref{list:localMax2} are satisfied so \eqref{eq:weak_localmax_argument} holds for all $\beta \in \mathcal{A}_{\invFF}^e$. Let $\beta \in \mathcal{A}_{\invFF}$, $\beta$ be bounded, and $\beta_j$, $j=1,\dots$, be a sequence of elementary stochastic processes $\beta_j \in \mathcal{A}_{\invFF}^e$ such $\beta_j$ converges pointwise in $\omega$ and uniformly in $t$ to $\beta$. 

Since $M_\pi$ is adapted and has the martingale property, it is a local martingale and
\begin{equation*}
\int\limits_0^T \beta_j(s) dM_\pi(s) \longrightarrow  \int\limits_0^T \beta(s) dM_\pi(s)  \quad \text{in probability as } j\to \infty.
\end{equation*}
By assumption \aui, the random variable $\int_0^T \beta(s) dM_\pi(s)$ is ${\mathbb{Q}_{\pi}}$-integrable so that 
\begin{equation}
\E_{\mathbb{Q}_{\pi}} \Big[  \int\limits_0^T \beta(s) dM_\pi(s)  \Big] = 0.
\label{eq:general_integral_zero}
\end{equation}
Since the mapping  $y \to \mathbb{E} \big[ U(X_{\pi+y\beta}(T) \big]$ is concave, from the computations in part \ref{list:localMax1} we see that  \eqref{eq:general_integral_zero} can only be zero if $\pi$ is a local maximum.
\end{proof}

With the introduction of the forthcoming assumption $A_{d^2}$ we can detail additional results on the convavity of \eqref{eq:concaveCondition} and the uniqueness of local maximums.

\smallskip
\textbf{Assumption $A_{d^2}$:} The utility function $U$ is twice differentiable, strictly increasing and concave. For $\pi \in \mathcal{A}_\mathbb{\invFF}$, we assume that for all $\beta \in \mathcal{A}_\mathbb{\invFF}$ bounded, there exists a $\delta_{\pi,\beta} > 0$, that may depend on $\beta$, such that the family  
\begin{align*}
\Big\{ &\, U''\big(X_{\pi+y\beta}(T)\big) X^2_{\pi+y\beta}(T) \uniproc^2(y,\beta,\pi) \\
&+ U'\big(X_{\pi+y\beta}(T)\big) X_{\pi+y\beta}(T)\big[ \uniproc(y,\beta,\pi)+\uniproc_y(y,\beta,\pi)\big] \Big\}_{|y|<\delta_{\pi,\beta}}
\end{align*}
%
is uniformly integrable where $\uniproc(y,\beta,\pi)$ is defined in \eqref{eq:Nybpi} and 
\begin{align}
\uniproc_y(y, & \beta,\pi) := \frac{d}{dy} \uniproc(y,\beta,\pi) \nonumber \\
=& -\int\limits_0^{\tau}\beta^2(s)\sigma^2(s) ds  
- \int\limits_0^{\tau}\int\limits_{\mathbb{R}_0} \frac{\beta^2(s)\theta^2(s,z)}{\big[1+\big(\pi(s)+y\beta(s)\big)\theta(s,z)\big]^2} N(d^-s,dz) \nonumber \\
&- \int\limits_0^{\tau} \frac{\beta^2(s)\kappa^2(s)}{\big[1+\big(\pi(s)+y\beta(s)\big)\kappa(s) \big]^2 } dH(s).
\label{eq:NyDerivative}
\end{align}

Since it is reasonable to assume that the coefficients $\sigma$, $\theta$ and $\kappa$ are not zero on the same time intervals, then $\uniproc_y(y,\beta,\pi) < 0$ for $|y| < \delta_{\pi,\beta}$ and $\beta\neq 0$.

\medskip
\noindent

Lemma \ref{lemma:concavity} will give us a sufficient condition for the concavity of \eqref{eq:concaveCondition} in Theorem \ref{theorem:localMax}. 

\begin{lemma}
Suppose \aui and $A_{d^2}$ hold with $|y| < \delta_{\pi,\beta}$, $\beta \in \mathcal{A}_\mathbb{\invFF}$ bounded, and that the utility function $U$ satisfies
\begin{equation}
xU''(x) + U'(x) \leq 0 \quad \text{for all } x>0.
\label{eq:Levy_varPr3}
\end{equation}
Then for $\pi \in \mathcal{A}_\mathbb{\invFF}$ the mappings \eqref{eq:concaveCondition}, $y \to \mathbb{E} \big[ U(X_{\pi+y\beta}(T) \big]$, $y\in(-\delta_{\pi,\beta},\delta_{\pi,\beta})$, $\delta_{\pi,\beta}>0$, are concave for all bounded controls $\beta \in \mathcal{A}_{\invFF}$.
\label{lemma:concavity}
\end{lemma}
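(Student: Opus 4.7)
The strategy is to show that $g(y) := \mathbb{E}[U(X_{\pi+y\beta}(T))]$ has second derivative $g''(y)\leq 0$ on $(-\delta_{\pi,\beta},\delta_{\pi,\beta})$, from which concavity follows at once. I will exploit the exponential form of $X_{\pi+y\beta}(T)$ in \eqref{eq:investor_sde_solution}: pathwise differentiation of the exponent in $y$ is legitimate because each integrand is smooth in $\pi(t)+y\beta(t)$ on $(-\delta_{\pi,\beta},\delta_{\pi,\beta})$ by \eqref{eq:pi_limitation}--\eqref{eq:pi_limitation2}, and forward integrals are linear in the integrand. This gives
\begin{equation*}
\frac{\partial}{\partial y} X_{\pi+y\beta}(T) = X_{\pi+y\beta}(T)\,\uniproc(y,\beta,\pi), \qquad \frac{\partial^2}{\partial y^2} X_{\pi+y\beta}(T) = X_{\pi+y\beta}(T)\bigl(\uniproc^2(y,\beta,\pi)+\uniproc_y(y,\beta,\pi)\bigr),
\end{equation*}
with $\uniproc$ and $\uniproc_y$ as in \eqref{eq:Nybpi} and \eqref{eq:NyDerivative}. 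Applying the chain rule twice to $U(X_{\pi+y\beta}(T))$ then produces a pointwise formula for $\frac{\partial^2}{\partial y^2} U(X_{\pi+y\beta}(T))$.

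Next I will move $\frac{d^2}{dy^2}$ past $\mathbb{E}$. The difference quotients approximating $g'(y)$ and $g''(y)$ are uniformly integrable by assumptions \aui{} and $A_{d^2}$ respectively, so the standard argument used in part \ref{list:localMax1} of Theorem \ref{theorem:localMax} (see \eqref{eq:interchanged} and \cite[Appendix A]{Dudley1999}) yields
\begin{equation*}
g''(y) = \mathbb{E}\Bigl[X_{\pi+y\beta}(T)\uniproc^2(y,\beta,\pi)\bigl(X_{\pi+y\beta}(T) U''(X_{\pi+y\beta}(T)) + U'(X_{\pi+y\beta}(T))\bigr) + U'(X_{\pi+y\beta}(T))X_{\pi+y\beta}(T)\uniproc_y(y,\beta,\pi)\Bigr].
\end{equation*}
The first summand is non-positive by \eqref{eq:Levy_varPr3}, together with $X_{\pi+y\beta}(T)>0$ and $\uniproc^2\geq 0$; the second is non-positive because $U'\geq 0$ ($U$ strictly increasing), $X_{\pi+y\beta}(T)>0$, and every term in \eqref{eq:NyDerivative} is minus a non-negative integrand against one of the positive measures $ds$, $N(d^-s,dz)$, $dH(s)$. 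Hence $g''(y)\leq 0$ and $g$ is concave.

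The main obstacle is the interchange step: all relevant quantities are built from \emph{forward} integrals, so It\^o-type dominated-convergence machinery is unavailable, and assumption $A_{d^2}$ (paired with \aui{} for the first derivative) is precisely the uniform-integrability substitute that legitimizes passing $\frac{d^2}{dy^2}$ inside $\mathbb{E}$. Once this is in hand, the sign analysis is immediate from \eqref{eq:Levy_varPr3} and the termwise non-positivity of $\uniproc_y$.
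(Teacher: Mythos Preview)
Your proposal is correct and follows essentially the same approach as the paper: compute $g''(y)$ by differentiating under the expectation (justified via the uniform-integrability assumptions \aui{} and $A_{d^2}$), arrive at the identical decomposition into two summands, and conclude non-positivity from \eqref{eq:Levy_varPr3} together with $\uniproc_y\leq 0$. The paper's proof is slightly terser, but the logic is the same; your only addition is making the pathwise differentiation of the exponential form and the interchange argument more explicit.
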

\begin{proof}
By assumptions \aui and $A_{d^2}$ the following equations hold true:
\begin{align}
\frac{d^2}{dy^2} & \mathbb{E}\Big[ U\big(X_{\pi+y\beta}(T) \big) \Big] = \nonumber \\
=&\; \frac{d}{dy} \mathbb{E}\Big[ \Big( U'\big(X_{\pi+y\beta}(T)\big) X_{\pi+y\beta}(T)  \uniproc(y,\beta,\pi) \Big) \Big] \nonumber \\
=&\; \mathbb{E} \bigg[ X_{\pi+y\beta}(T) \uniproc^2(y,\beta,\pi) \Big( U''\big(X_{\pi+y\beta}(T)\big) X_{\pi+y\beta}(T) + U'\big(X_{\pi+y\beta}(T)\big) \Big) \nonumber \\ 
&+ U'\big(X_{\pi+y\beta}(T)\big)  X_{\pi+y\beta}(T) \uniproc_y(y,\beta,\pi) \bigg], \quad |y| < \delta_{\pi,\beta}
\label{eq:second_derivative}
\end{align}
Thanks to \eqref{eq:Levy_varPr3} and the observation that $\uniproc_y(y,\beta,\pi)<0$ for all $|y| < \delta_{\pi,\beta}$, both summands are negative and the mapping \eqref{eq:concaveCondition} is locally concave.
\end{proof}

\begin{remark}
Examples of utility functions satisfying \eqref{eq:Levy_varPr3} are the power utility $U(x) = \frac{1}{1-c} x^{1-c}$ when $c>1$, and logarithmic utility $U(x) = log(x)$, while the exponential utility, $U(x) = \frac{-1}{\gamma} e^{-\gamma x}$, does not.
\label{remark:concave}
\end{remark}

\begin{remark}
Condition \eqref{eq:Levy_varPr3} can also be discussed in terms of the \emph{Arrow Pratt measure of relative risk aversion}. This measure is defined by 
\begin{equation*}
R_u(x) = \frac{ - x U''(x)}{U'(x)},
\end{equation*}
so an equivalent way of stating condition \eqref{eq:Levy_varPr3} would be to require the  $R_u(x) \geq 1 $.
\end{remark}

We can use a concavity argument from the derivatives to get some form of uniqueness. A similar argument occurs in \cite{Kohatsu-Higa2006}, where it is proven that local maximums are unique in the case of logarithmic utility under some restriciton on admissible controls. In our case we have the following result:


\begin{theorem}
Suppose $A$ is a convex set in $\mathcal{A}_{\invFF}$ such that all $\pi\in A$ are bounded. If for all $\pi \in A$ assumptions \aui,  $A_{d^2}$ with $|y| < \delta_{\pi,\beta}$, $\beta \in \mathcal{A}_{\invFF}$ bounded, and \eqref{eq:Levy_varPr3} is satisfied, then there can at most be one local maximum in $A$.
\label{theorem:uniqueness}
\end{theorem}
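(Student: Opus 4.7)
The plan is to assume, for contradiction, that there exist two distinct local maxima $\pi_1, \pi_2 \in A$ and to study the one-dimensional function
\begin{equation*}
h(\lambda) := \mathbb{E}\big[ U\big( X_{\pi_1 + \lambda \beta}(T) \big) \big], \qquad \beta := \pi_2 - \pi_1,
\end{equation*}
on a small open neighbourhood of $[0,1]$. Since $A$ is convex and its elements are bounded, every $\pi_\lambda := \pi_1 + \lambda\beta = (1-\lambda)\pi_1 + \lambda\pi_2$ lies in $A$ for $\lambda \in [0,1]$, and $\beta$ is a bounded element of $\mathcal{A}_{\invFF}$. Hence Assumptions \aui, $A_{d^2}$ and \eqref{eq:Levy_varPr3} are available at every point of the segment.

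The first step is to exploit the local-maximum property at both endpoints. Because $\pi_1$ is a local maximum one has $h(\lambda) \leq h(0)$ for small $|\lambda|$, and Assumption \aui licenses differentiation under the expectation exactly as in \eqref{eq:interchanged}, giving $h'(0)=0$. Perturbing $\pi_2$ by $-\beta$ and running the same argument near $\lambda = 1$ yields $h'(1) = 0$.

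The second step is to prove that $h$ is strictly concave on $[0,1]$. For each $\lambda^* \in [0,1]$ Assumptions \aui, $A_{d^2}$ and \eqref{eq:Levy_varPr3} hold at $\pi_{\lambda^*}$, so the computation \eqref{eq:second_derivative} in the proof of Lemma \ref{lemma:concavity} applies with base portfolio $\pi_{\lambda^*}$ and perturbation $\beta$, yielding an expression for $h''(\lambda^*)$. By \eqref{eq:Levy_varPr3} the $(U''X + U')$-summand is non-positive, while the $\uniproc_y$-summand is strictly negative: $U' > 0$ since $U$ is strictly increasing, $X > 0$ by construction, and $\uniproc_y(0,\beta,\pi_{\lambda^*}) < 0$ almost surely because $\beta \not\equiv 0$ and the noise coefficients $\sigma,\theta,\kappa$ are not simultaneously vanishing, as remarked right after \eqref{eq:NyDerivative}. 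Hence $h''(\lambda^*) < 0$ for every $\lambda^* \in [0,1]$.

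Strict concavity then forces $h'$ to be strictly decreasing on $[0,1]$, which is incompatible with $h'(0) = 0 = h'(1)$. This contradiction shows $\beta \equiv 0$ and $\pi_1 = \pi_2$. The main delicacy I expect is in the third step: rigorously arguing that $\uniproc_y(0,\beta,\pi_{\lambda^*}) < 0$ $\Prob$-a.s.\ along the whole segment, which hinges on the structural assumption that the driving noises do not jointly degenerate on a set of positive measure, so that $\beta \not\equiv 0$ genuinely produces strict, and not merely weak, concavity of $h$.
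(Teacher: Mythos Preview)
Your proposal is correct and follows essentially the same approach as the paper: assume two local maxima, study the one-parameter family along the segment, use the first-order condition at both endpoints to get $h'(0)=h'(1)=0$, and invoke the second-derivative computation from Lemma~\ref{lemma:concavity} at every point of the segment to obtain strict monotonicity of $h'$, yielding the contradiction. If anything, you are slightly more explicit than the paper about why the concavity is \emph{strict} (tracing it back to $\uniproc_y<0$ via the non-degeneracy remark after \eqref{eq:NyDerivative}), which the paper's proof uses but does not spell out.
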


\begin{proof}
Suppose $\pi_1, \pi_2 \in A$ are two local maximums. Let $\pi_2 - \pi_1 = \bar{\pi}$. Since A is convex, we have $\pi_1 +y\bar{\pi}\in A$ for $y\in [0,1]$. We note that 
\begin{equation*}
\frac{d}{dy} \mathbb{E} \big[ U \big( X_{\pi_1+y \bar{\pi}}(T) \big) \big]_{|y=a} = \frac{d}{d\zeta} \mathbb{E} \big[ U \big( X_{(\pi_1+a \bar{\pi}) + \zeta\bar{\pi}}(T) \big) \big]_{|\zeta=0} \quad \text{for } a\in [0,1].
\end{equation*}
Indeed \aui and $A_{d^2}$ hold for $(\pi_1+a \bar{\pi})$ as it is an element of $A$. In particular we can apply Lemma \ref{lemma:concavity} to conclude that $\frac{d}{dy} \mathbb{E} \big[ U \big( X_{\pi_1+y \bar{\pi}}(T) \big) \big]$
is strictly monotone for $y \in [0,1]$.


We show that there cannot exist two local maximums by contradiction. Consider 
\begin{align}
\frac{d}{dy} \mathbb{E} \big[ U \big( X_{\pi_1+y \bar{\pi}}(T) \big) \big]_{|y=1} = \frac{d}{d\zeta} \mathbb{E} \big[ U \big( X_{\pi_1+ \bar{\pi} + \zeta\bar{\pi}}(T) \big) \big]_{|\zeta=0} = \frac{d}{d\zeta} \mathbb{E} \big[ U \big( X_{\pi_2 + \zeta\bar{\pi}}(T) \big) \big]_{|\zeta=0} = 0
\label{eq:contradiction}
\end{align}
since $\pi_1+\bar{\pi} =\pi_2$, and $\pi_2$ is a local maximum. On the other hand, we also have that $\pi$ is a local maximum, hence
\begin{equation*}
\frac{d}{dy} \mathbb{E} \big[ U \big( X_{\pi_1+y \bar{\pi}}(T) \big) \big]_{|y=0}  = 0.
\end{equation*}
Consequently $\frac{d}{dy} \mathbb{E} \big[ U \big( X_{\pi_1+y \bar{\pi}}(T) \big) \big]$ is strictly monotone and zero at two different points, which is absurd.
\end{proof}

\subsection{Some examples with logarithmic utility}

We concentrate on the logarithmic utility to reduce computation and highlight some interesting aspects of the analysis. Note that if $U(x) = \ln(x)$ then $F_\pi(T) = 1$ in \eqref{eq:Fpi}. 
By application of Theorem \ref{theorem:localMax} the following equation plays a crucial role:
\begin{align}
0 =&\; \mathbb{E} \big[ F_{\pi}(T) \big( M_{\pi}(s) - M_{\pi}(t) \big) \Big| \invF_t \Big] \nonumber \\
=&\; \mathbb{E} \Big[  M_{\pi}(s) - M_{\pi}(t) \big| \invF_t \Big] \nonumber \\
=&\; \mathbb{E} \Big[ \int\limits_{t\wedge \tau} ^{s\wedge\tau} \big[ \mu(r)-\rho(r)-\sigma^2(r)\pi(r) - \int\limits_{\mathbb{R}_0} \frac{\pi(r) \theta^2(r,z)} {1+\pi(r)\theta(r,z) } \nu(dz) \big] dr \nonumber \\
&+ \int\limits_{t\wedge\tau}^{s\wedge\tau} \sigma(r) d^-W (r) + \int\limits_{t\wedge\tau}^{s\wedge\tau} \int\limits_{\mathbb{R}_0} \frac{\theta(r,z)}{1+\pi(r)\theta(r,z)} \tilde{N}(d^- r,dz) \nonumber \\
&+ \int\limits_{t\wedge\tau}^{s\wedge\tau} \frac{\kappa(r)}{1+\kappa(r)\pi(r)} dH(r) \Big| \invF_t \Big] \quad\quad s\geq t.
\label{eq:log_problem}
\end{align}

\begin{example}

Assume that $H$ is independent of $W$ and $N$ and that all the coefficients are $\mathbb{F}$-adapted, as in classical market modeling. Further we assume that $\Lambda(ds) = \lambda(s) ds$, for some positive stochastic process $\lambda$. 

We consider the case of an investor having access to an information flow $\invFF$ with $\invF_t \subseteq \mathcal{F}_t \vee \mathcal{F}^H_t$, for all $t\in[0,T]$. We call this a \emph{case of partial information}. The expectation of the forward integrals in \eqref{eq:log_problem} are zero in this setup, so the equation can be written 
\begin{align*}
0 =&\;  \mathbb{E} \Big[ \int\limits_{t\wedge\tau}^{s\wedge\tau} \Big[ \mu(r)-\rho(r)-\sigma^2(r)\pi(r) - \int_{\mathbb{R}_0} \frac{\pi(r) \theta^2(r,z)} {1+\pi(r)\theta(r,z) } \nu(dz) \Big] dr \nonumber \\
&+ \int\limits_{t\wedge\tau}^{s\wedge\tau} \frac{\kappa(r)}{1+\kappa(r)\pi(r)} dH(r) \Big| \invF_t \Big], \quad s\geq t.
\end{align*}
Dividing by $(s-t)$ and letting $s \to t$, we find that the locally optimal $\pi(t)$ in this case must satisfy 
\begin{align}
0 =&\; \mathbf{1}_{\{\tau>t\}} \mathbb{E} \Big[  \mu(t)-\rho(t)-\sigma^2(t)\pi(t) \nonumber \\
&- \int_{\mathbb{R}_0} \frac{\pi(r) \theta^2(r,z)} {1+\pi(r)\theta(r,z) } \nu(dz) + \frac{\kappa(t)}{1+\kappa(t)\pi(t)} \lambda(t) \Big| \invF_t \Big].
\label{eq:partial2}
\end{align}
For illustration, assume $\theta= 0$. Then \eqref{eq:partial2} yields a polynomial equation in $\pi(t)$ of degree 2:
\begin{align*}
0 =&\; \ind_{\tau>t} \Big( \E \Big[ \mu(t) -\rho(t) + \kappa(t)  \lambda(t) \Big| \invF_t \Big] + \pi(t) \E \Big[ \mu(t)\kappa(t) -\rho(t)\kappa(t) -\sigma^2(t)  \lambda(t) \Big| \invF_t \Big]\\
&-\pi^2(t) \E \Big[ \sigma^2(t) \kappa(t)  \lambda(t) \Big| \invF_t \Big]\Big)
\end{align*}
\end{example}

\begin{example}
Assume that $\invF_t = \mathcal{F}_t \vee \mathcal{F}^H_t$ and that $H$ contains no anticipating information on $\mathbb{F}$. In this case we say that the investor has \emph{full information}. If $\Lambda(ds) = \lambda(s) ds$ and the coefficients $\mu$, $\sigma$, $\theta$, $\kappa$ are adapted to $\invFF$, equation \eqref{eq:partial2} reduces to 
\begin{align}
0 =&\; \mathbf{1}_{\{\tau>t\}} \Big( \mu(t)-\rho(t)-\sigma^2(t)\pi(t) \nonumber \\
&- \int_{\mathbb{R}_0} \frac{\pi(r) \theta^2(r,z)} {1+\pi(r)\theta(r,z) } \nu(dz) + \frac{\kappa(t)}{1+\kappa(t)\pi(t)} \lambda(t) \Big)
\label{eq:full}
\end{align}
If we assume $\theta=0$, the explicit solution of \eqref{eq:full} is given by 
\begin{equation}
\pi = \frac{1}{2\kappa}\Bigg( \frac{\kappa(\mu-\rho)}{\sigma^2} - 1  + \sqrt{\Big(1-\frac{\kappa(\mu-\rho)}{\sigma^2}\Big)^2 + 4\kappa \Big(\frac{\mu-\rho+\lambda\kappa}{\sigma^2}\Big) } \; \Bigg),
\label{eq:full2}
\end{equation}
where we used \eqref{eq:pi_limitation} to exclude one of the two solutions of the quadratic eqauation.
Remark that equation \eqref{eq:full} gives us Merton ratio when $\kappa= \theta=0$.

\begin{figure}
\centering
\subfloat[Default risk \emph{not} compensated by higher drift]{\label{fig:logUtilityNoComp}\includegraphics{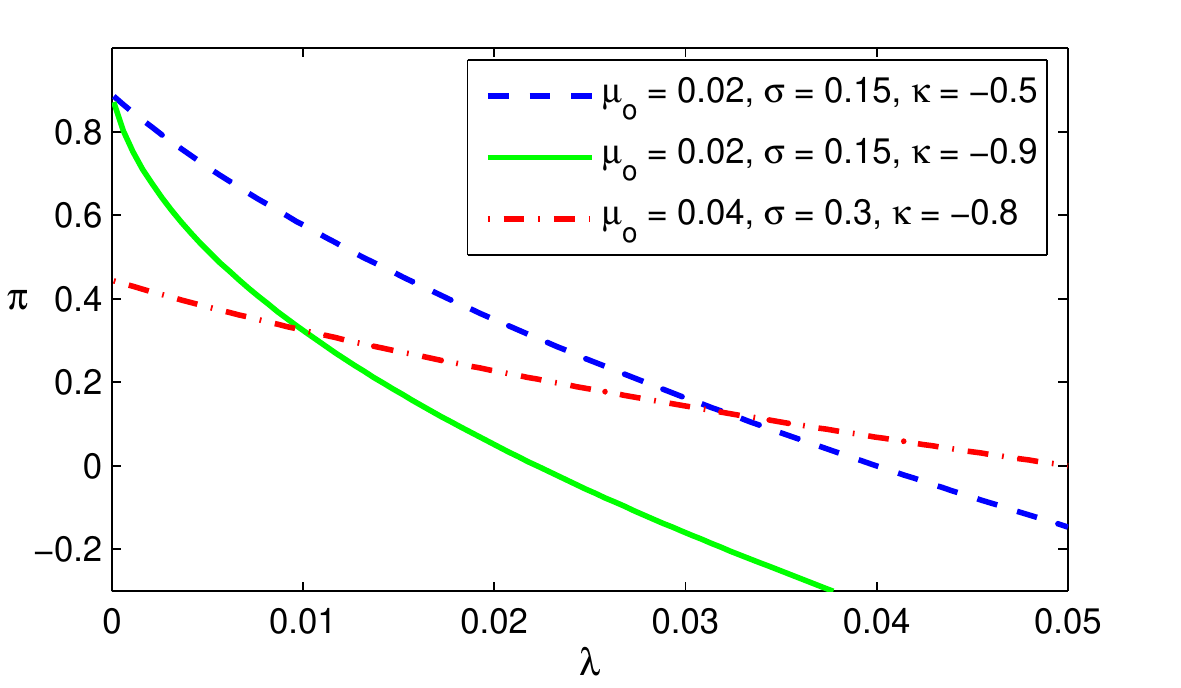}} \\  
\subfloat[Default risk compensated by higher drift]{\label{fig:logUtilityComp}\includegraphics{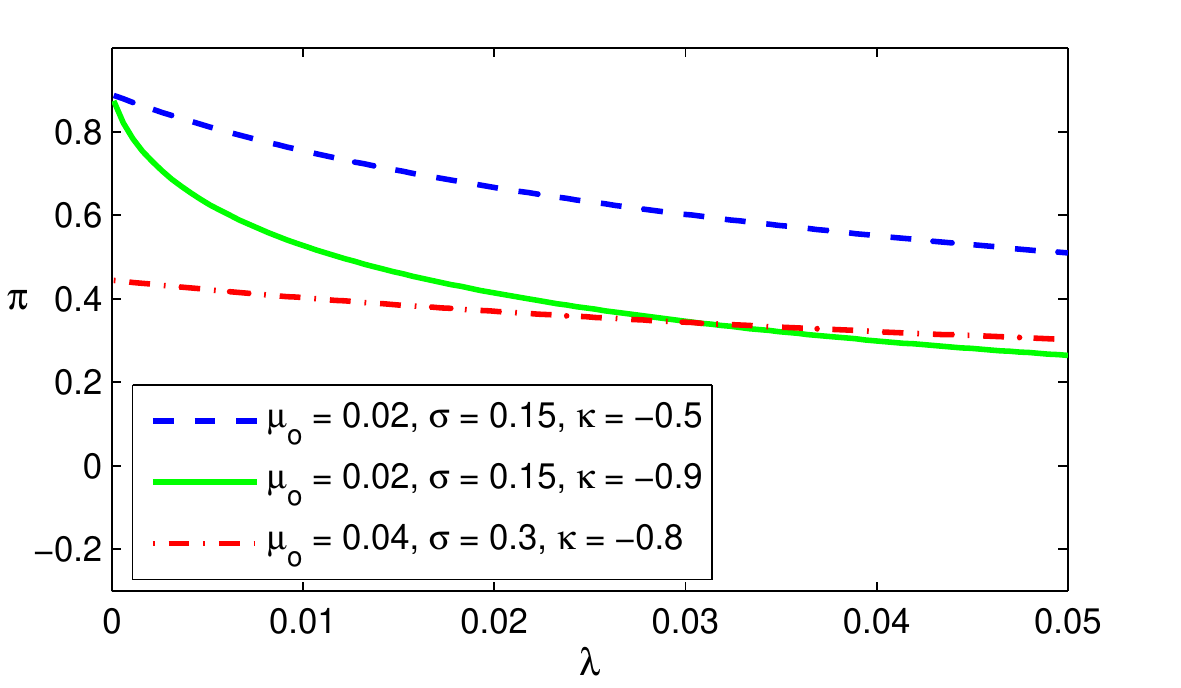}}
\caption{Optimal investment $\pi$ as a function of $\lambda$.}             
\label{fig:logutility}
\end{figure}


Two explicit examples with full information can be found in the figures. In Figure \ref{fig:logUtilityNoComp} the stock price is modeled as
\begin{equation}
dS_1(t) =  S(t-) \Big( \mu_o dt + \sigma dW(t) + \kappa dH(t)\Big), 
\label{eq:stockprice_simple}
\end{equation}
with $\mu_o, \sigma$, $\kappa$ fixed, and $\rho=0$. We see that with higher default risk the agent invests less and the asset is also shorted when the overall return becomes negative at the point $\lambda\kappa = - \mu_o$.

In Figure \ref{fig:logUtilityComp} the stock price is modeled as
\begin{equation}
dS_1(t) =  S(t-)\Big( \big(\mu_o -\lambda\kappa \big) dt + \sigma dW(t) + \kappa dH(t)\Big) 
\label{eq:stockprice_simple_comp}
\end{equation}
The assumptions in \eqref{eq:stockprice_simple_comp} are similar to \eqref{eq:stockprice_simple}. But with the term $-\lambda\kappa$ in the drift, the expected return of the asset is invariant to the value of $\lambda$. So the agent invests less due to risk aversion and not due to changes in the asset returns.
\end{example}


Next we explicitly detail how after-default and/or multiple defaults are easily treated in our framework.
\begin{example}
Here we discuss a model with default time $\zeta$. After default the asset has a recovery process with different dynamics than before default. We assume that it is possible to invest both before and after default and set $\tau = T$. Set $H(t) = \ind_{\{\zeta\leq t\}}(t)$ and
\begin{align*}
\mu(s) &= \mu_1 \ind_{\{H(s-) = 0 \}}(s) + \mu_2 \ind_{\{H(s-) > 0 \} }(s) \\
\theta(s,z) &= z \theta_1 \ind_{\{H(s-) = 0 \}}(s) + z \theta_2 \ind_{\{H(s-) > 0 \} }(s) \\
\sigma(s) &=  \sigma_1 \ind_{\{ H(s-) = 0 \}}(s) + \sigma_2 \ind_{\{ H(s-) > 0 \}}(s) 
\end{align*}
where $\mu_1, \mu_2, \theta_1, \theta_2, \sigma_1, \sigma_2 \in \mathbb{R}$. Here $\theta_1$ and $\sigma_1$ are the coefficients of the noises of the asset dynamics pre-default and $\theta_2$ and $\sigma_2$ the coefficients after default while $\mu_1$ and $\mu_2$ are the drift coefficients before and after default respectively. 

In the case of full information as above, $\invF_t = \mathcal{F}_t \vee \mathcal{F}^H_t$, the optimization scheme seperates into pre-default and after default. The optimal portfolio $\pi$ satisfies:

\begin{align*}
0 =&\; \mathbf{1}_{\{\zeta>t\}} \Big( \mu_1-\rho(t)-\sigma_1^2(t)\pi(t) \nonumber \\
&- \int_{\mathbb{R}_0} \frac{\pi(r) z\theta_1^2} {1+\pi(r)z \theta_1 } \nu(dz) + \frac{\kappa(t)}{1+\kappa(t)\pi(t)} \lambda(t) \Big) \\
&+ \mathbf{1}_{\{\zeta\leq t\}} \Big( \mu_2-\rho(t)-\sigma_2^2(t)\pi(t) \nonumber \\
&- \int_{\mathbb{R}_0} \frac{\pi(r) z\theta_2^2} {1+\pi(r)z \theta_2 } \nu(dz)  \Big).
\end{align*}
\end{example}

\bigskip

The cases of anticipating information, \ie $\invF_t \supseteq \mathcal{F}_t \vee \mathcal{F}^H_t$, are more subtle than partial or full information, with various approaches being possible depending on the specific conditions. The main challenge with anticipating information is to evaluate the terms $\mathbb{E} \big[ \int_{t\wedge\tau}^{s\wedge\tau} \sigma(r) d^-W (r) \big| \invF_t \big]$ and $\mathbb{E} \big[ \int_{t\wedge\tau}^{s\wedge\tau} \int_{\mathbb{R}_0} \frac{\theta(r,z)}{1+\pi(r)\theta(r,z)} \tilde{N}(dz,d^-r) \big| \invF_t \big]$ in \eqref{eq:log_problem}.

One possible way to compute the expectations of the forward integrals above is to exploit Malliavin calculus, see \cite[Chapter 8 and Chapter 15]{DiNunno2009} for the theoretical framework. However we must stress that this general approach cannot always be taken here. In fact the application of Malliavin calculus requires that the integrands are measurable with respect to $\mathcal{F}_T$, which is not, in general, the case when considering default risk. 

See also \cite{DiNunno2006} on how the $\tilde{N(}d^-t,dz)$ integral can be evaluated using predictable compensators of the measure with respect to $\invFF$ and \cite{Biagini2005,Leon2003,Kohatsu-Higa2006} for other examples on the $d^-W$ integral in insider models without default risk.

\bigskip
Hereafter we show an example where the process $H$ contains anticipating information on the jumps of $N$. 
In this example the process $\int_0^t \int_\Rr z \tilde{N}(ds,dz)$, $t\in [0,T]$, is martingale under $(\Prob, \mathbb{F})$ but not under $(\Prob, \invFF)$. While we claim no particular market model related to this, we show how the suggested framework enables solutions for optimization problems with anticipating information. In particular the process $H$ generalizes the optimization problem not only because of it's precense, but also because it adds knowledge on the other noises driving $S_1$. Nevertheless a solution is obtained from Theorem \ref{theorem:localMax}.


\begin{example}
Assume $\nu(dz) = \gamma \ind_{\{1\}}(dz)$, \ie the Poisson random measure $N(dt,dz)$ is actually $dN(t)$ where $N(t),\, t\geq 0$ is a Poisson process with intensity $\gamma$. Define $\tau = \inf_{t} \{ N(t+\epsilon) - N(t-\epsilon) \geq 2 \}$, $\epsilon >0$, and $H(t) = \ind_{\{ \tau \leq t\}}(t)$. Thus $H$ is independent of $W$ but \emph{is} dependent on $N$. Note that $\tau$ contains anticipating information with respect to $\mathbb{F}$ since $(\tau > t)$ implies $N(t+\epsilon) - N(t-\epsilon) < 2$. Set $\invF_t = \mathcal{F}_t \vee \mathcal{F}_t^H$.

Our ad hoc interpretation is that too many bad events (represented by $N$) in a limited time span ($2 \epsilon$) will cause the firm to default (with a loss $\kappa H$ and the asset is no longer tradeable $\tau$).

We assume $\rho$, $\mu$, $\sigma$, $\theta$ and $\kappa$ are constants. Starting from \eqref{eq:log_problem}
\begin{align*}
0 =&\; \mathbb{E} \Big[ \int\limits_{t\wedge\tau}^{s\wedge\tau} \big[ \mu-\rho-\sigma^2\pi(r) - \frac{\pi(r) \theta^2} {1+\pi(r)\theta^2} \gamma \big] dr 
+ \int\limits_{t\wedge\tau}^{s\wedge\tau} \sigma(r) d^-W (r) \\
&+ \int\limits_{t\wedge\tau}^{s\wedge\tau} \frac{\theta}{1+\pi(r)\theta} \big( N(dr)-\gamma dr \big) + \int\limits_{t\wedge\tau}^{s\wedge\tau} \frac{\kappa}{1+\kappa\pi(r)} dH(r) \Big| \invF_t \Big].
\end{align*}
Computing the predictable compensators of $H$ and $N$ (sketched below), and dividing by $(t-s)$ we find that the optimal $\pi$ is a solution of
\begin{align*}
0=&\; \mu-\sigma^2 \pi(r) - \frac{\pi(r) \theta^2} {1+\pi(r)\theta^2} \gamma + \frac{\theta(r,z)}{1+\pi(r)\theta} \frac{\gamma}{1+\gamma \epsilon} \ind_{\{N(t)-N(t-\epsilon) = 0\}} \\
&+\ind_{\{N(t)-N(t-\epsilon) = 0\}} \frac{\gamma^2 \epsilon}{1+\gamma \epsilon} + \ind_{\{N(t)-N(t-\epsilon) = 1\}}\gamma.
\end{align*}
 

%

To compute the $\invFF$-predictable compensators of $H$ and $N$ we investigate the intensities (see, e.g. \cite[Section 3.2]{Bremaud1981}) on the set $\{t<\tau\}$
\begin{align*}
\lambda^N_t :=\lim_{\Delta t \to 0^+} \frac{1}{\Delta t} \E \big[ N(t+\Delta t) - N(t)  \big| \invF_t,t<\tau \big], \\
\lambda_t := \lim_{\Delta t \to 0^+} \frac{1}{\Delta t} \E \big[(H(t+\Delta t) - H(t) \big| \invF_t,t<\tau \big].
\end{align*}
The $\invFF$-predictable compensators of $H$ and $N$ are then given by $\Lambda(t) = \int_0^t \lambda(s) ds$ and $\Lambda^N (t) = \int_0^t \lambda^N(s) ds$. We consider the case of $N$, the computations for $H$ are similar. First note that 
\begin{equation*}
\lim_{\Delta t \to 0^+} \frac{1}{\Delta t} \Prob \big( N(t+\Delta t) - N(t) > 1 \big| \invF_t,t \leq \tau \big) = 0.
\end{equation*}
Recall that $(\tau > t)$ implies $N(t+\epsilon) - N(t-\epsilon) < 2$ and that $(\tau > t, N(t)-N(t-\epsilon) = 1 )$ implies $N(t+\epsilon) - N(t) = 0$. For $\Delta t < \epsilon$, 
\begin{align*}
\Prob \Big(& N(t+\Delta t) - N(t) =1  \Big| \invF_t,t\leq \tau \Big) = \\
& \ind_{\{ N(t)- N(t-\epsilon) = 0 \}} \Prob \Big( N(t+\Delta t) - N(t) = 1 \Big| N(t)- N(t-\epsilon) = 0, t \leq \tau \Big) \\
&+  \ind_{\{ N(t)- N(t-\epsilon) = 1 \}} \Prob \Big( N(t+\Delta t) - N(t) = 1 \Big| N(t)- N(t-\epsilon) = 1, t\leq \tau \Big) .
\end{align*}
%
We have
\begin{align*}
\Prob \Big( N(t+\Delta t) - N(t) = 1 \Big|& t\leq \tau ,N(t)-N(t-\epsilon)=0 \Big)  \\
=&\; \Prob\Big( N(t+\Delta t)-N(t) = 1 \Big| N(t+\epsilon)-N(t) < 2 \Big) \\ 
=&\; \frac{\Prob\Big( N(t+\Delta t)-N(t) = 1 ,\; N(t+\epsilon )-N(t+\Delta t) = 0 \Big) }{\Prob\Big( N(t+\epsilon )-N(t ) <2  \Big)} \\
=&\; \frac{\gamma \Delta t e^{-\gamma \Delta t} e^{-\gamma(\epsilon-\Delta t)}}{e^{-\gamma \epsilon} + \gamma \epsilon e^{-\gamma \epsilon}}
\end{align*}
and
\begin{equation*}
\Prob \Big( N(t+\Delta t) - N(t) = 1 \Big| t \leq \tau ,N(t)-N(t-\epsilon)=1 \Big) = 0.
\end{equation*}
Thus
\begin{equation*}
\lambda^N(t)=   \frac{\gamma}{1+\gamma \epsilon} \ind_{\{N(t)-N(t-\epsilon) = 0\}} \quad \text{for } t\leq \tau.
\end{equation*}
Similarly we find
\begin{equation*}
\lambda_t = \ind_{\{N(t)-N(t-\epsilon) = 0\}} \frac{\gamma^2 \epsilon}{1+\gamma \epsilon} + \ind_{\{N(t)-N(t-\epsilon) = 1\}}\gamma \quad \text{for } t\leq \tau.
\end{equation*}

\end{example}

\section{On the driving processes as semi-martingales}

The results of Theorem \ref{theorem:localMax} take a more specific form when $M_\pi$ is $\invFF$-adapted. This will be our standing assumption throughout the section, implying that $\mathcal{F}_t, \mathcal{F}_t^H \subset \invF_t$ for all $t\in[0,T]$ and that the integrands $\rho, \mu, \sigma, \theta$ and $\kappa$ are $\invFF$-adapted.

\begin{theorem} 
Suppose that $M_\pi$ is $\invFF$-adapted and that for $\pi \in \mathcal{A}_{\invFF}$ assumption \aui holds.
\begin{enumerate}
\renewcommand{\theenumi}{\roman{enumi})}
\renewcommand{\labelenumi}{\theenumi}
\item If $\pi$ is a local maximum, then $M_\pi(t)$, $t\in[0,T]$, is a martingale under $(\mathbb{Q}_{\pi},\invFF )$. 
\label{list:martingale1}
\item If $\pi$ is a local maximum, then the stochastic process
\begin{equation*}
\hat{M}_\pi(t) = M_\pi(t) - \int\limits_0^t \frac{1}{Z(s)}d[M_\pi,Z](s), \quad t\in[0,T],
\end{equation*}
is a martingale under $(\mathbb{P},\invFF )$. Here, we have set 
\begin{equation*}
Z(t) := \mathbb{E}_{\mathbb{Q}_\pi} \Big[ \frac{d\mathbb{P}}{d\mathbb{Q}_\pi} \big| \invF_t \Big] = \Big( \mathbb{E} \big[ F_\pi(T) \big| \invF_t \big] \Big)^{-1}.
\end{equation*}
\label{list:martingale2}
\end{enumerate}
%
Assume that the mapping $y \to \mathbb{E} \big[ U(X_{\pi+y\beta}(T) \big]$ is concave for all bounded controls $\beta \in \mathcal{A}_{\invFF}$. Then we also have the converse conclusions
\begin{enumerate}
\renewcommand{\theenumi}{\roman{enumi})}
\renewcommand{\labelenumi}{\theenumi}
\setcounter{enumi}{2}
\item If $M_\pi$ is a martingale under $(\mathbb{Q}_{\pi},\invFF)$, then $\pi$ is a local maximum.
\label{list:martingale3}
\item If the stochastic process
\begin{equation*}
\hat{M}_\pi(t) = M_\pi(t) - \int\limits_0^t \frac{1}{Z(s)} d[M_\pi,Z](s), \quad t\in[0,T],
\end{equation*}
is a martingale under $( \mathbb{P},\invFF)$, then $\pi$ is a local maximum.
\label{list:martingale4}
\end{enumerate}
\label{theorem:martingale}
\end{theorem}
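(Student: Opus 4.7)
My plan is to split the four claims into two pairs: parts (i) and (iii), which follow directly from Theorem \ref{theorem:localMax} under the standing assumption that $M_\pi$ is $\mathbb{G}$-adapted; and parts (ii) and (iv), which amount to a Girsanov-type transition between the measures $\mathbb{Q}_\pi$ and $\mathbb{P}$.

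For part (i), Theorem \ref{theorem:localMax}(i) already gives the martingale property of $M_\pi$ under $(\mathbb{Q}_\pi,\mathbb{G})$; combined with adaptedness, this is precisely the definition of a $(\mathbb{Q}_\pi,\mathbb{G})$-martingale. Part (iii) is equally a direct appeal to Theorem \ref{theorem:localMax}(iii): a true $(\mathbb{Q}_\pi,\mathbb{G})$-martingale has in particular the martingale property, and together with the concavity hypothesis this is exactly the input of that theorem, so $\pi$ is a local maximum.

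For part (ii) I would invoke the Girsanov--Meyer theorem for semimartingales (see, e.g., \cite{Protter2005}). Since $d\mathbb{P}/d\mathbb{Q}_\pi = 1/F_\pi(T)$, and $Z$ is the corresponding $\mathbb{Q}_\pi$-martingale density process, the change of measure from $(\mathbb{Q}_\pi,\mathbb{G})$ to $(\mathbb{P},\mathbb{G})$ transforms the $(\mathbb{Q}_\pi,\mathbb{G})$-local martingale $M_\pi$ (which by part (i) is in fact a true martingale) into the $(\mathbb{P},\mathbb{G})$-local martingale
\begin{equation*}
\hat{M}_\pi(t) = M_\pi(t) - \int_0^t \frac{1}{Z(s)}\, d[M_\pi,Z](s).
\end{equation*}
Integrability, needed to upgrade to a true martingale, is supplied by Assumption \aui. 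For part (iv) I would run the same Girsanov transformation in reverse: the $(\mathbb{P},\mathbb{G})$-martingale property of $\hat{M}_\pi$ is equivalent to the $(\mathbb{Q}_\pi,\mathbb{G})$-martingale property of $M_\pi$, after which part (iii) yields that $\pi$ is a local maximum.

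The principal technical obstacle is localised in (ii) and (iv). One must verify that $[M_\pi,Z]$ is well-defined (which uses the explicit semimartingale decomposition of $M_\pi$ visible in \eqref{eq:MpiDefinition}), that the Girsanov change-of-measure is legitimate in the possibly non-standard filtration $\mathbb{G}$, and that the passage from local to true martingales is justified by the integrability embedded in Assumption \aui. The notational mismatch between $Z(s)$ in the theorem statement and the $Z(s-)$ appearing in the standard Girsanov formula should be benign in view of the semimartingale structure of the integrator here.
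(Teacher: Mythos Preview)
Your proposal is correct and follows essentially the same route as the paper: parts (i) and (iii) are deduced directly from Theorem \ref{theorem:localMax} using the adaptedness of $M_\pi$, and parts (ii) and (iv) are obtained by the Girsanov--Meyer change-of-measure theorem (the paper cites \cite[Part III, Theorem 39]{Protter2005}). If anything, your treatment is more careful than the paper's very terse proof, which does not spell out the $Z(s)$ versus $Z(s-)$ issue or the local-to-true martingale upgrade.
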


\begin{proof} ~\newline
\noindent \textbf{Part \ref{list:martingale1}} Being $M_\pi$ $\invFF$-adapted, it is a $( \mathbb{P},\invFF)$-martingale.

\smallskip \noindent \textbf{Part \ref{list:martingale2}} is obtained by application of the Girsanov theorem (see in particular \cite[Part III, Theorem 39]{Protter2005}).

\smallskip \noindent \textbf{Part \ref{list:martingale3}} is a direct application of Theorem \ref{theorem:localMax}.

\smallskip \noindent \textbf{Part \ref{list:martingale4}} is again an application of the Girsanov theorem.
%
\end{proof}
The existence of a local maximum also has other implications. 

\begin{theorem}
If a local maximum exists, $M_\pi$ is $\invFF$-adapted and \aui holds, then $W$ and $\int_0^{t\wedge \tau} \int_{\Rr} \theta(s,z) \tilde{N}(ds,dz)$, $t\in [0,T]$, are semi-martingales under $(\mathbb{P},\invFF)$.
\label{theorem:semi-martingale}
\end{theorem}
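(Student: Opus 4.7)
My plan starts from Theorem \ref{theorem:martingale}, Part \ref{list:martingale1}: under the given hypotheses, $M_\pi$ is a martingale under $(\mathbb{Q}_\pi, \invFF)$. Since $\mathbb{Q}_\pi \sim \mathbb{P}$ and the semi-martingale property is preserved under equivalent changes of measure---which is essentially what Theorem \ref{theorem:martingale}, Part \ref{list:martingale2} records through the Girsanov-type identity
\[ M_\pi(t) = \hat M_\pi(t) + \int_0^t \tfrac{1}{Z(s)}\, d[M_\pi,Z](s), \]
with $\hat M_\pi$ a $(\mathbb{P},\invFF)$-martingale and the correction term of finite variation---$M_\pi$ is automatically a $(\mathbb{P},\invFF)$-semi-martingale.

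Next I would peel off the obvious finite-variation summands in the explicit expression \eqref{eq:MpiDefinition}. The $ds$-integral is absolutely continuous and, by admissibility (Definition \ref{definition:Allowable_controls1}) together with \eqref{eq:finite_default_integrals}, has integrable total variation. The counting-process term $\int_0^{t\wedge\tau} \tfrac{\kappa(s)}{1+\kappa(s)\pi(s)} dH(s)$ is also of finite variation (finitely many jumps of $H$ on $[0,T]$, with each summand bounded by \eqref{eq:pi_limitation}). Subtracting both from $M_\pi$ yields that
\[ A(t) + B(t) := \int_0^{t\wedge\tau} \sigma(s)\, d^- W(s) + \int_0^{t\wedge\tau} \int_{\Rr} \frac{\theta(s,z)}{1+\pi(s)\theta(s,z)}\, \tilde N(d^-s, dz) \]
is a $(\mathbb{P}, \invFF)$-semi-martingale. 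Because $A$ is continuous whereas $B$ is purely discontinuous with jumps occurring only at atoms of $N$ (disjoint from atoms of $H$ by \eqref{assumption:no_common_jumps}), uniqueness of the decomposition of a semi-martingale into its continuous and purely discontinuous parts forces both $A$ and $B$ individually to be $(\mathbb{P},\invFF)$-semi-martingales.

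From the semi-martingale property of $A$ I recover $W^{\tau}$ by stochastically integrating the $\invFF$-predictable process $1/\sigma$ against $dA$, which makes $W^{\tau}$ an integral against a $(\mathbb{P},\invFF)$-semi-martingale, hence itself a $(\mathbb{P},\invFF)$-semi-martingale. For the jump statement, setting $\phi := \theta/(1+\pi\theta)$ so that $\theta = \phi\,(1+\pi\theta)$, and using that $\pi,\theta$ are $\invFF$-adapted c\`agl\`ad with $1+\pi\theta\geq \epsilon_\pi > 0$ by admissibility, the target process $\int_0^{t\wedge\tau}\int_{\Rr}\theta(s,z)\tilde N(d^-s,dz)$ arises from $B$ by an $\invFF$-predictable rescaling of its jumps at the atoms of $N$, plus the compensator $\int_0^{t\wedge\tau}\int_{\Rr}\theta(s,z)\nu(dz)\,ds$, which is of finite variation by \eqref{eq:finite_default_integrals}.

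The main obstacle I anticipate is this last stage, namely the two \emph{inversions}. Recovering $W^\tau$ needs a non-degeneracy condition on $\sigma$ (or else a localization argument on its zero set), while the jump-rescaling for the $\tilde N$-integral is cleanest once one observes that the $\invFF$-predictability of $\phi$ permits passing to the jump measure of $B$ and pushing the transformation $\phi \mapsto \theta$ through it. These are, by now, standard semi-martingale manipulations, but it is precisely here that the distinction between forward integration and It\^o integration under the enlarged filtration $\invFF$ must be carefully reconciled.
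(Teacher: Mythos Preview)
Your approach is essentially the same as the paper's: both start from the $(\mathbb{Q}_\pi,\invFF)$-martingale property of $M_\pi$, strip off the $H$- and $ds$-terms, separate the continuous piece from the jump piece, and then invert the factors $\sigma$ and $1+\pi\theta$ to recover the driving noises. The only real difference is the order of operations. The paper first subtracts the $(\mathbb{Q}_\pi,\invFF)$-compensated $H$-term $M_\pi^H$ so that $M_\pi-M_\pi^H$ is still a $(\mathbb{Q}_\pi,\invFF)$-martingale, then invokes the martingale decomposition \cite[Theorem~I.4.18]{Jacod2003} into continuous and purely-discontinuous martingale parts $M_\pi^W$ and $M_\pi^N$ \emph{under} $\mathbb{Q}_\pi$, and only afterwards transfers each piece to $\mathbb{P}$ via Girsanov. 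You instead Girsanov $M_\pi$ to a $(\mathbb{P},\invFF)$-semi-martingale first and then split. The paper's ordering is a little cleaner precisely at the step you flag: decomposing a \emph{martingale} into continuous and purely-discontinuous martingale parts is textbook, whereas your split of the $(\mathbb{P},\invFF)$-semi-martingale $A+B$ requires noting that the compensator part of $B$ is continuous and must be matched with the continuous finite-variation component of the canonical decomposition (so $B$ is not literally ``purely discontinuous'' as a process, only its jump part is). Both routes are valid; the paper also tacitly uses non-degeneracy of $\sigma$ for the inversion and, like you, only really controls $W$ up to $\tau$.
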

%
%
\begin{proof}
Assume a local maximum $\pi \in \mathcal{A}_{\invFF}$ exists. By Theorem \ref{theorem:localMax} this implies that $M_\pi$ is a $(\mathbb{Q}_{\pi},\invFF)$-martingale. Define
\begin{equation*}
M_\pi^H (t) := \int\limits_0^{t\wedge \tau} \frac{\kappa(s)}{1+\pi(s) \kappa(s)} dH(s) - \int\limits_0^{t\wedge \tau}  \frac{\kappa(s)}{1+\pi(s) \kappa(s)} \Lambda^{\mathbb{Q}_\pi} (ds) , \quad t\in[0,T], 
\end{equation*}
where $\Lambda^{\mathbb{Q}_\pi}$ is the $(\mathbb{Q}_{\pi},\invFF)$-predictable compensator of $H$. Note that $M_\pi^H$ is a $(\mathbb{Q}_{\pi},\invFF)$-martingale and thus $M_\pi -M_\pi^H$ is also a $(\mathbb{Q}_{\pi},\invFF)$-martingale. We can (uniquely) decompose $M_\pi -M_\pi^H$ into a continuous martingale and pure jump martingale \cite[Theorem 1.4.18]{Jacod2003} which we denote by $M_\pi^W$ and $M_\pi^N$ respectively. By the definition of $M_\pi$ it follows that we can write
\begin{align*}
M_\pi^W(t) &:= \int\limits_0^{t\wedge \tau} \sigma(s) d^-W(s) + \int\limits_0^{t\wedge \tau} \sigma(s) a(s) ds , \quad t \in [0,T], \\
M_\pi^N(t) &:=  \int\limits_0^{t\wedge \tau} \int\limits_\Rr \frac{\theta(s,z)}{1+\pi(s) \theta(s,z)} \tilde{N}(d^-s,dz) + \int\limits_0^{t\wedge \tau} \gamma(s) ds ,\quad t\in[0,T], \\
\end{align*}
where 
$a$ and $\gamma$ is such that 
\begin{align}
\int\limits_0^{t\wedge \tau} \sigma(s) a(s) ds + \int\limits_0^{t\wedge \tau} \gamma(s) ds = \int\limits_0^{t\wedge \tau}  \mu(s) - \rho(s) -\sigma^2(s) \pi(s) \nonumber \\
- \int\limits_\Rr \frac{\theta^2(s,z)\pi(s)}{1+\pi\theta(s,z)}  \nu(dz) ds + \int\limits_0^{t\wedge \tau} \frac{\kappa(s)}{1+\kappa(s)\pi(s)} \Lambda^{\mathbb{Q}_\pi}(ds).
\label{eq:a_plus_gamma} 
\end{align}
%
The right hand side of \eqref{eq:a_plus_gamma} have a finite $\Prob$ expectation by the assumptions \eqref{eq:finite_default_integrals} and Definition \ref{definition:Allowable_controls1} so that $\int_0^t \sigma(s) a(s) ds$ and $\int_0^t \gamma(s) ds$ are processes of finite variation.

As in Theorem \ref{theorem:martingale}, $M_\pi^W - \int_0^{t\wedge \tau} \frac{1}{Z(s)} d[M_\pi^W,Z]$ is a $(\mathbb{P},\invFF)$-martingale. We note that $[M_\pi^W,Z]$ is absolutely continouos with respect to Lebesgue by the Kunita-Watanabe-inequality (see for instance \cite[Theorem 25]{Protter2005}) since the quadratic variation of $M_\pi^W$ is absolutely continouos with respect to Lebesgue. 
Thus the quadratic variation of $\int_0^{t\wedge \tau} \frac{1}{\sigma(s)}  M_\pi^W (ds)$ is $t$, making $\frac{1}{\sigma}  M_\pi^W $ a $(\mathbb{P},\invFF)$-Brownian motion. Hence $W$ has the $(\mathbb{P},\invFF)$ semi-martingale decomposition
\begin{equation*}
W(t) = \tilde{W}(t) + \int\limits_0^{t\wedge \tau} a(s) ds - \int\limits_0^{t\wedge \tau} \frac{1}{Z(s) \sigma(s) } [M_\pi^W,Z] ds 
\end{equation*}
where $\tilde{W}$ is a $(\mathbb{P},\invFF)$-Brownian motion.

Similarly, 
\begin{align}
& M_\pi^N(t) - \int\limits_0^{t\wedge \tau} \frac{1}{Z(s)} d[M_\pi^N,Z](s) \nonumber \\
& = \int\limits_0^{t\wedge \tau} \int\limits_\Rr \frac{\theta(s,z)}{1+\theta(s,z) \pi(s)} \tilde{N}(d^-s,dz)+\int\limits_0^{t\wedge \tau} \gamma(s) ds - \int_0^t \frac{1}{Z(s)} d[M_\pi^N,Z](s), \quad t\in[0,T],
\label{eq:MpiN}
\end{align}
is a $(\mathbb{P},\invFF)$-martingale. We have that
\begin{equation*}
\int\limits_0^{t\wedge \tau} \frac{1}{Z(s)} d[M_\pi^N,Z](s) = \int\limits_0^{t\wedge \tau} \frac{ \E\big[ U\big(X_{\pi} (T) \big) X_{\pi}(T) \big| \invF_s \big]}{ \E\big[ U\big(X_\pi (T)\big) X_\pi (T) \big]   } d[ M_{\pi}^N,Z](s), \quad t\in[0,T],
\end{equation*}
is a $\invFF$-adapted process of finite variation and thus a $(\Prob,\invFF)$ semi-martingale (recall that $[ M_{\pi}^N,Z]$ is of finite variation \cite[p. 67]{Protter2005}). Since also $M_\pi^N$ and $\int_0^t \gamma(s) ds$  are $(\Prob,\invFF)$ semi-martingales we must have that 
\begin{equation*}
\int\limits_0^{t\wedge \tau} \int\limits_\Rr \frac{\theta(s,z)}{1+\theta(s,z) \pi(s)} \tilde{N}(d^-s,dz), \quad t\in[0,T]
\end{equation*}
is a $(\Prob,\invFF)$ semi-martingale by \eqref{eq:MpiN}. Since the $1+\theta \pi$ is bounded away from zero (recall Definition \ref{definition:Allowable_controls1}) we must also have that $\int_0^{t\wedge \tau} \int_\Rr \theta(s,z) \tilde{N}(d^-s,dz)$ is a semi-martingale.
\end{proof}
Finally we do an analysis on the jumps of $H$.
\begin{theorem}
Assume that a local maximum exists, $N=0$, $M_\pi$ is $\invFF$-adapted and Assumption \aui holds. Then the jumps of $H$ are totally inaccessible stopping times (for the filtration $\invFF$) and the $(\mathbb{Q}_\pi,\mathbb{G})$-predictable compensators of $H$ is absolutely continuous with respect to Lebesgue.
\label{teorem:totally_inaccessible}
\end{theorem}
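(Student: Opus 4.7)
The plan is to exploit that $M_\pi$ is a $(\mathbb{Q}_\pi,\mathbb{G})$-martingale (Theorem~\ref{theorem:localMax}) whose continuous part is genuinely continuous thanks to Theorem~\ref{theorem:semi-martingale}, and then read off both claims from the Doob--Meyer structure of the jump part. Since $N=0$, I split
\[
M_\pi(t)=A(t)+V(t),\qquad V(t):=\int_0^{t\wedge\tau}\frac{\kappa(s)}{1+\kappa(s)\pi(s)}\,dH(s),
\]
where $A$ collects the $ds$ and $d^{-}W$ contributions. By Theorem~\ref{theorem:semi-martingale}, $W$ is a $(\mathbb{P},\mathbb{G})$-semimartingale, so Lemma~\ref{Giulia_8_9} identifies $\int_0^{\cdot\wedge\tau}\sigma\,d^{-}W$ with the classical It\^o integral; hence $A$ has continuous paths. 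A Girsanov change of measure to $\mathbb{Q}_\pi$ preserves continuity and only contributes a drift that is absolutely continuous with respect to Lebesgue, so under $(\mathbb{Q}_\pi,\mathbb{G})$ one decomposes $A=A^m+A^d$ with $A^m$ a continuous local martingale and $A^d$ an absolutely continuous finite-variation process.

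For total inaccessibility, I let $T$ be a predictable $\mathbb{G}$-stopping time. The martingale property of $M_\pi$ under $\mathbb{Q}_\pi$ gives $\mathbb{E}_{\mathbb{Q}_\pi}[\Delta M_\pi(T)\mid\mathcal{G}_{T-}]=0$, and the continuity of $A$ reduces this to
\[
\frac{\kappa(T)}{1+\kappa(T)\pi(T)}\,\mathbb{Q}_\pi\bigl(\Delta H(T)=1\mid\mathcal{G}_{T-}\bigr)=0,
\]
since the c\`agl\`ad processes $\pi,\kappa$ make the prefactor $\mathcal{G}_{T-}$-measurable. On the set where this prefactor is nonzero the conditional jump probability vanishes, and since $\mathbb{Q}_\pi\sim\mathbb{P}$, $T$ cannot then be a jump time of $H$; hence every jump of $H$ is totally inaccessible.

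For absolute continuity of the $(\mathbb{Q}_\pi,\mathbb{G})$-predictable compensator $\Lambda^{\mathbb{Q}_\pi}$ of $H$, subtracting $A^m$ from $M_\pi$ shows that $A^d+V$ is a $(\mathbb{Q}_\pi,\mathbb{G})$-martingale. Writing $V=V^m+\Lambda^V$ for the Doob--Meyer decomposition of $V$, the process $A^d+\Lambda^V=(A^d+V)-(V-\Lambda^V)$ is a continuous predictable finite-variation local martingale and hence vanishes by uniqueness, so $\Lambda^V=-A^d$ is absolutely continuous with respect to Lebesgue. The identity $\Lambda^V(dt)=\frac{\kappa(t)}{1+\kappa(t)\pi(t)}\,\Lambda^{\mathbb{Q}_\pi}(dt)$ together with the boundedness of $1+\kappa\pi$ away from zero (Definition~\ref{definition:Allowable_controls1}) then transfers absolute continuity from $\Lambda^V$ back to $\Lambda^{\mathbb{Q}_\pi}$. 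The main obstacle is handling the prefactor $\kappa/(1+\kappa\pi)$ where $\kappa$ could vanish: on such a set the martingale property of $M_\pi$ carries no information about $H$, so both conclusions ultimately rely on the modelling stipulation that $\kappa$ represents a non-trivial (negative) default loss at the jumps of $H$.
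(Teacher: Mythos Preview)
Your proof is correct and follows the same overall strategy as the paper: split $M_\pi$ into its continuous part $A$ (the $ds$ and $d^{-}W$ terms) and the pure–jump part $V$ coming from $H$, and exploit that $M_\pi$ is a $(\mathbb{Q}_\pi,\mathbb{G})$-martingale. The paper's argument is terser: it simply observes that since $A_1$ is continuous, the predictable compensator of $A_2$ must also be continuous (because $A_1+\Lambda^{A_2}$ is a martingale whose only possible jumps would be the predictable jumps of $\Lambda^{A_2}$, which must then vanish), and cites the standard equivalence between continuity of the compensator and total inaccessibility of the jump times. You instead argue total inaccessibility directly by testing $\mathbb{E}_{\mathbb{Q}_\pi}[\Delta M_\pi(T)\mid\mathcal{G}_{T-}]=0$ at predictable times $T$; this is an equivalent route.

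Where your write-up goes beyond the paper is the absolute–continuity claim. The paper only explicitly deduces \emph{continuity} of the compensator; to upgrade this to absolute continuity with respect to Lebesgue one really needs the extra structure that the finite-variation part of $A$ under $\mathbb{Q}_\pi$ is absolutely continuous. You obtain this by invoking Theorem~\ref{theorem:semi-martingale} (so that the forward integral becomes an It\^o integral) and then noting that the Girsanov drift is absolutely continuous because $\int\sigma\,dW$ has Lebesgue–absolutely continuous quadratic variation; the paper's proof leaves this step implicit. Your explicit acknowledgement of the $\kappa=0$ obstruction is also fair: both proofs tacitly rely on $\kappa$ being nonzero at jump times of $H$, which is a modeling rather than a mathematical hypothesis.
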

%
%
\begin{proof}
With the above assumptions, $M_\pi$ is a $(\mathbb{Q}_\pi,\mathbb{G})$-martingale by Theorem \ref{theorem:localMax}.
Denote 
\begin{align*}
A_1(t) &= M_\pi(t)  - \int\limits_0^t \frac{\kappa(s)}{1+\kappa(s)\pi(s)} H(ds) \\
A_2(t) &= \int\limits_0^t \frac{\kappa(s)}{1+\kappa(s)\pi(s)} H(ds), 
\end{align*}
for $t\in [0,T]$. Remark that $A_1+A_2 = M_\pi$ is a martingale. Since $A_2$ is discontinuous, $A_1$ must be the sum of the predictable compensator of $A_2$ and a martingale. Hence, since $A_1$ is continuous the compensator of $A_2$ is continuous. It immediately follows that the jump times of $H$ are totally inaccesible (see, e.g. \cite[Corollary 22.18]{Kallenberg1997}). 
\end{proof}


\clearpage
\bibliographystyle{siam}
\bibliography{referanser} 
\end{document}